\definecolor{lightgray}{gray}{0.9}
\newcommand{\R}{\mathbb{R}}
\newcommand{\B}{\mathbb{B}}
\newcommand{\Q}{\mathbb{Q}}
\newcommand{\NP}{\mathscr{NP}}
\newcommand{\bigO}{\mathcal{O}}
\newcommand{\polytope}{\mathcal{P}}
 \let\mathscr\relax
\newcommand{\BCPk}{\textsc{BCP}_k}
\newcommand{\BCP}{\textsc{BCP}}
\newcommand{\BCPtwo}{\textsc{BCP}_2}
\newcommand{\BCPkUnw}{\textsc{1-BCP}_k}
\newcommand{\BCPtwoUnw}{\textsc{1-BCP}_2}
\newcommand{\CutForm}{\textsc{cut-alg}}   
\newcommand{\FlowForm}{\textsc{flow-alg}}  
\newcommand{\FlowTwoForm}{\textsc{flow2-alg}} 
\providecommand{\keywords}[1]{\textit{Keywords:} #1}
\DeclareMathOperator{\opt}{\rm{OPT}}
\DeclareMathOperator{\convexhull}{\rm{convex-hull}}
\theoremstyle{plain}
\newtheorem{theorem}{Theorem}
\newtheorem{proposition}[theorem]{Proposition}
\theoremstyle{definition}
\newtheorem{problem}{Problem}
\begin{document}

\title{Integer Programming Approaches to \\ Balanced Connected $k$-Partition\protect\footnote{Research partially supported by grant \#2015/11937-9, S\~ao Paulo Research Foundation (FAPESP).}}

\author[1]{Flávio K. Miyazawa\thanks{(fkm@ic.unicamp.br) supported by
    CNPq (Proc. 314366/2018-0 and 425340/2016-3)  and FAPESP (Proc. 2016/01860-1)}}
\author[2]{Phablo F. S. Moura\thanks{(phablo@ic.unicamp.br) supported by FAPESP  grants 2016/21250-3 and 2017/22611-2, and CAPES.}}
\author[3]{Matheus J. Ota\thanks{(matheus.ota@students.ic.unicamp.br)  supported by CNPq.}}
\affil[1]{Instituto de Computação, Universidade Estadual de Campinas, Brazil}
\author[4]{Yoshiko Wakabayashi\thanks{(yw@ime.usp.br) supported by CNPq (Proc. 306464/2016-0 and 423833/2018-9).}} 
\affil[2]{Instituto de Matemática e Estatística, Universidade de São Paulo, Brazil}

\maketitle

\begin{abstract}
We address the problem of partitioning a vertex-weighted connected graph into~$k$ connected
  subgraphs that have similar weights, for a fixed integer~$k\geq 2$. This problem, known as the
  \textit{balanced connected $k$-partition problem} ($\BCPk$), is defined as follows.
  Given a connected graph~$G$ with nonnegative weights on the vertices, find a
  partition~$\{V_i\}_{i=1}^k$ of~$V(G)$ such that each class $V_i$ induces a connected subgraph
  of~$G$, and the weight of a class with the minimum weight is as large as possible.  
  It is known  that~$\BCPk$ is $\NP$-hard even on bipartite graphs and on interval graphs.  
  It has been largely investigated under different approaches and perspectives.
  On the practical side, $\BCPk$ is used to model many applications
  arising in police patrolling, image processing, cluster analysis, operating systems and robotics.
  We propose three integer linear programming formulations for the balanced connected $k$-partition problem.
  The first one contains only binary variables and a potentially large number of constraints that are separable in polynomial time. 
  Some polyhedral results on this formulation, when all vertices have unit weight, are also presented.
  The other formulations are based on flows and have a polynomial number of constraints and variables.
  Preliminary computational experiments have shown that the proposed formulations
  outperform the other formulations presented in the literature.
\end{abstract}

\keywords{connected partition, mixed integer linear programming, polyhedra, facet.}

\section{Introduction}
 
The graphs considered here are simple, connected and undirected. 
If~$G$ is a graph, then $V(G)$ denotes its vertex set and $E(G)$ denotes its edge set.
Later, for some formulations, we shall refer to directed graphs (or simply, digraphs)~$D$, with vertex set $V(D)$ and arc set $A(D)$.  
For an integer~$k\geq 1$, as usual, the symbol~$[k]$ denotes the set $\{1,2,\dots,k\}$.
Throughout this text, $k$ denotes a positive integer number.

Let $G$ be a connected graph.  A \emph{$k$-partition} of~$G$ is a
collection~$\{V_i\}_{i \in [k]}$ of nonempty subsets of~$V(G)$ such
that~$\bigcup_{i=1}^k V_i = V(G)$, and \hbox{$V_i \cap V_j = \emptyset$} for all
$i,j \in [k]$, $i \neq j$.  We refer to each set $V_i$ as a \emph{class} of the partition.
In this context, we assume that $|V(G)|\geq k$, otherwise~$G$ does not admit a
$k$-partition.  We say that a $k$-partition $\{V_i\}_{i\in [k]}$ of $G$ is
\emph{connected} if~$G[V_i]$, the subgraph of~$G$ induced by~$V_i$, is connected for
each~$i \in [k]$.
Let~$w\colon V(G) \to \Q_>$ be a function that assigns weights to the vertices of~$G$.
For every subset~$V^\prime \subseteq V(G)$, we define~$w(V^\prime)= \sum_{v \in V^\prime} w(v)$.
To simplify notation, we write~$w(H)=w(V(H))$ when~$H$ is a graph.
In the \emph{balanced connected $k$-partition problem} ($\BCPk$), we are given a
vertex-weighted connected graph, and we seek a connected $k$-partition such that the
weight of a lightest class of this partition is maximized. A more formal definition of
this problem is given below, as well as an example of an instance for $\BCPtwo$ (see
Figure~\ref{fig:instance-sol}).
For some fixed positive integer~$k$, each input of~$\BCPk$ is given by a pair~$(G,w)$.  We denote by~$\opt_k(G,w)$ the weight of a lightest set
in an optimal connected $k$-partition of~$V(G)$; but write simply $\opt(G,w)$ when~$k$ is
clear from the context.  Furthermore, we simply denote by~$G$ the instance in which~$w$ is
an assignment of a constant value to all vertices (which we may assume to
be~$1$). 

\begin{problem}
 \textsc{Balanced Connected $k$-Partition} ($\BCPk$)\\
 \textsc{Instance:} a connected graph $G$ and a vertex-weight function $w \colon V(G) \to \Q_>$. \\
 \textsc{Find:} a connected $k$-partition $\{V_i\}_{i\in [k]}$ of $V(G)$.\\
 \textsc{Goal:} maximize $\min_{i \in [k]} (w(V_i))$.
\end{problem}

\begin{figure}[!hbt]
    \centering
    \subfigure[A vertex-weighted graph.\label{subfig:instance}]{\includegraphics[scale=1.35]{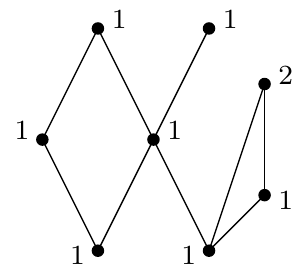}}
 \hspace{2cm}
    \subfigure[Optimal solution of cost 4. The dashed edge links vertices of different classes.]{\includegraphics[scale=1.35]{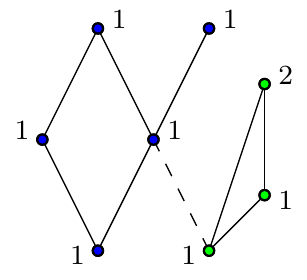}}
    \caption{An instance for~$\BCPtwo$ and its  optimal solution\label{fig:instance-sol}}
\end{figure}

There are several problems in police patrolling, image processing, data base, operating
systems, cluster analysis, education and robotics that can be modeled as a balanced
connected partition problem~\cite{BecPer83, LucPerSim89, LucPerSim93, MarSimNal97,
  MatBoz12, ZhoWanDinHuSha19}.  These different real-world applications indicate the
importance of designing algorithms for~$\BCPk$, and reporting on the computational
experiments with their implementations.  Not less important are the theoretical studies of
the rich and diverse mathematical formulations and the polyhedral investigations $\BCPk$
leads to.

Let us denote by~$\BCPkUnw$ the restricted version of~$\BCPk$ in which all vertices have
unit weight.  One may easily check that~$\BCPtwoUnw$ on 2-connected graphs can be solved
in polynomial time.  This problem also admits polynomial-time algorithms on graphs such
that each block has at most two articulation vertices~\cite{AliCal99, Chl96, GalMorMaf95,
  GalMorMaf97}.  Dyer and Frieze~\cite{DyeFri85} showed that, for every~$k \geq 2$,
~$\BCPkUnw$ is $\NP$-hard on bipartite graphs.  Wu~\cite{Wu12} proved that $\BCPk$ is
$\NP$-hard even on interval graphs, for all~$k \geq 2$.

For the~$\BCPtwo$ (weighted version), Chlebíková~\cite{Chl96} showed that this problem
is $\NP$-hard to approximate within an absolute error guarantee
of~$|V(G)|^{1- \varepsilon}$, for all~$\varepsilon > 0$.  In that same paper, Chlebíková
designed a $4/3$-approximation algorithm for that problem.  For~$\BCP_3$ and $\BCP_4$ on
$3$-connected and~$4$-connected graphs, respectively, there exist $2$-approximation
algorithms proposed by Chataigner~et~al.~\cite{ChaSalWak07}.

Wu~\cite{Wu12} showed the first pseudo-polynomial algorithm for~$\BCPtwo$ restricted to
interval graphs.  Based on this algorithm and using a scaling technique, Wu obtained a
$(1+\varepsilon)$-approximation with running time~$\bigO((1+1/\varepsilon) n^3)$,
where~$n$ is the number of vertices of the input graph.

Bornd\"orfer~et~al.~\cite{BorEliSch19} designed a $\Delta$-approximation for~$\BCPk$,
where~$\Delta$ is the maximum degree of an arbitrary spanning tree of the given graph.
This is the first known approximation algorithm on general graphs.

Mixed integer linear programming (MILP) formulations were proposed for~$\BCPtwo$ by
Matic~\cite{Mat14} and for~$\BCPk$ by Zhou~et~al.~\cite{ZhoWanDinHuSha19}.  Additionally,
Matic presented an heuristic algorithm based on a variable neighborhood search (VNS)
technique for $\BCPtwo$, and Zhou~et~al. devised a genetic algorithm for~$\BCPk$.  In both
works, the authors showed results of computational experiments to illustrate the quality
of the solutions constructed by the proposed heuristics and their running times compared
to the exact algorithms based on the MILP formulations.  
No polyhedral study of their formulations was reported.

 This paper is organized as follows.
In Section~\ref{section:cut}, we present a natural cut formulation for~$\BCPk$ and also two stronger valid inequalities for this formulation.
A further polyhedral study  of this formulation, when all vertices have unit weight, is presented in Section~\ref{section:cut-polytope}. 
Two of the inequalities in the formulation are shown to define facets, and one of them is characterized when it is facet-defining. 
In Section~\ref{section:flow}, we present a flow and a multicommodity flow based formulations for~$\BCPk$.
In Section~\ref{section:experiments}, we report on the computational
experiments with our formulations and also with those presented by
Matic~\cite{Mat14} and Zhou et al.~\cite{ZhoWanDinHuSha19}. 
We summarize our theoretical and practical contributions for~$\BCPk$
in Section~\ref{section:conclusion}.

\section{Cut formulation}
\label{section:cut}

In this section, the following concept will be useful. Let~$u$ and~$v$ be two non-adjacent vertices
in a graph~$G$.  We say that a set~$S \subseteq V(G)\setminus\{u,v\}$ is a $(u,v)$-\emph{separator}
if $u$ and $v$ belong to different components of~$G-S$.  We define~$\Gamma(u,v)$ as the collection
of all minimal $(u,v)$-separators in $G$.

Let $(G,w)$ be an input for $\BCPk$.  We propose the following natural integer linear programming
formulation~$\mathcal{C}_k (G,w)$ for $\BCPk$.  For that, for every vertex~$v \in V(G)$
and~$i \in [k]$, we define a binary variable~$x_{v,i}$ representing that~$v$ belongs to the $i$-th
class if and only if~$x_{v,i}=1$. In this formulation, to get hold of a class with the smallest
weight, we impose an ordering of the classes, according to their weights, so that the first class
becomes the one whose weight we want to minimize.

\begin{align} 
           \max \:&  \sum_{v \in V(G)} w(v)\: x_{v,1}\nonumber & \\
    \text{s.t.} \:& \sum_{v \in V(G)} w(v)\:  x_{v,i}  \leq \sum_{v \in V(G)} w(v)\:  x_{v,i+1} & \!\!\! \!\!\!  \forall v \in V(G), i \in [k-1],\label{ineq:asymetric} \\ 
                \:& \sum_{i \in [k]} x_{v,i}  \leq 1 &  \!\!\! \!\!\! \forall v \in V(G),\label{ineq:cover} \\ 
                \:& x_{u,i} +  x_{v,i} - \sum_{z \in S} x_{z,i}  \leq 1 &  \!\!\! \!\!\! \forall  uv \notin E(G), S \in \Gamma(u,v), i \in [k], \label{ineq:cut} \\ 
                \:& x_{v,i} \in \{0,1\} & \!\!\! \!\!\! \forall v \in V(G) \text{ and } i \in [k].\label{ineq:integer} 
\end{align}

Inequalities~\eqref{ineq:asymetric} imply a non-decreasing weight ordering of the classes. 
Inequalities~\eqref{ineq:cover} impose that every vertex is assigned to at most one class.
Inequalities~\eqref{ineq:cut} guarantee that every class induces a connected subgraph.
The objective function maximizes the weight of the first class. Thus, in an optimal solution no class
will be empty, and therefore it will always correspond to a connected $k$-partition of $G$.

We observe that the separation problem associated with inequalities~\eqref{ineq:cut} can be solved in
polynomial time by reducing it to the minimum cut problem.
Thus, the linear relaxation of~$\mathcal{C}_k$ can be solved in polynomial time because of the equivalence of separation and optimization (see~\cite{grotschel2012geometric}).

Since the feasible solutions of the formulation above may have empty classes, to refer to these solutions
 we introduce the following concept. A \emph{connected $k$-subpartition of~$G$} is
 a connected $k$-partition of a subgraph (not necessarily proper) of~$G$.
 Henceforth, we assume that if ~$\mathcal{V}=\{V_i\}_{i=1}^k$ is a
 connected $k$-subpartition of~$G$, then $w(V_i)\leq w(V_{i+1})$ for all~$i \in [k-1]$.
 For such a $k$-subpartition $\mathcal{V}$, we denote by~$\xi(\mathcal{V}) \in \B^{nk}$ the binary vector such that its non-null entries are
 precisely~$\xi(\mathcal{V})_{v,i} = 1$ for all~$i \in [k]$ and~$v \in V_i$ (that is, $\xi(\mathcal{V})$ denotes the incidence vector
 of $\mathcal{V}$).

 We next show that the previous formulation correctly models $\BCPk$.  For that, let~$\polytope_k(G,w)$
 be the polytope associated with  that formulation, that is,
$$\polytope_k(G,w) = \convexhull \{x \in \B^{nk} \colon x \text{ satisfies inequalities }\eqref{ineq:asymetric}-\eqref{ineq:cut} \text{ of }  \mathcal{C}_k (G,w)\}.$$
In the next proposition we show that $\polytope_k(G,w)$ is the convex hull of the incidence vectors of connected
$k$-subpartitions of $G$. 

\begin{proposition}
  Let $(G,w)$ be an input for $\BCPk$. Then, the following holds.
$$\polytope_k(G,w) = \convexhull \{\xi(\mathcal{V}) \in \B^{nk} \colon \mathcal{V} \text{ is a connected $k$-subpartition of }G \}.$$
\end{proposition}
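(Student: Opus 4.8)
The plan is to pass from the polytopes to the finite sets of integral points they are hulls of. By definition $\polytope_k(G,w)$ is already the convex hull of
$$A := \{x \in \B^{nk} \colon x \text{ satisfies } \eqref{ineq:asymetric}\text{--}\eqref{ineq:cut}\},$$
while the right-hand side of the claim is the convex hull of
$$B := \{\xi(\mathcal{V}) \colon \mathcal{V} \text{ is a connected } k\text{-subpartition of } G\}.$$
Since taking convex hulls preserves equality, it suffices to prove the set identity $A = B$. I would prove the two containments separately, moving back and forth between a $0/1$ vector $x$ and the associated classes $V_i := \{v \in V(G) \colon x_{v,i} = 1\}$ (and, conversely, between a subpartition and its incidence vector), noting throughout that empty or singleton classes induce connected subgraphs and hence impose no constraint of type~\eqref{ineq:cut}.

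For the inclusion $B \subseteq A$, let $\mathcal{V}=\{V_i\}_{i\in[k]}$ be a connected $k$-subpartition with $w(V_i)\le w(V_{i+1})$ and set $\xi := \xi(\mathcal{V})$. Disjointness of the classes gives~\eqref{ineq:cover}, and the weight ordering gives~\eqref{ineq:asymetric} directly. The only nontrivial point is~\eqref{ineq:cut}: fix non-adjacent $u,v$, a separator $S\in\Gamma(u,v)$, and an index $i$. The inequality is immediate unless $u,v\in V_i$; in that case $G[V_i]$ is connected, so there is a $u$--$v$ path $P$ inside $V_i$. Viewing $P$ as a path in $G$ and using that $S$ separates $u$ from $v$ in $G$, the path must meet $S$; since $S\cap\{u,v\}=\emptyset$, the common vertex is an internal vertex of $P$ lying in $V_i\cap S$. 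Hence $\sum_{z\in S}\xi_{z,i}\ge 1$, so the left-hand side of~\eqref{ineq:cut} is at most $1+1-1=1$. Thus $\xi\in A$.

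For the reverse inclusion $A\subseteq B$, take $x\in A$ and form the classes $V_i$. Again~\eqref{ineq:cover} yields pairwise disjointness and~\eqref{ineq:asymetric} the ordering $w(V_i)\le w(V_{i+1})$, so all that remains is to show each $G[V_i]$ is connected. I would argue by contradiction: suppose some $G[V_i]$ is disconnected and pick $u,v\in V_i$ in distinct components; these are necessarily non-adjacent in $G$. Then $V(G)\setminus V_i$ is a $(u,v)$-separator, and greedily deleting superfluous vertices from it produces an inclusion-minimal separator $S\subseteq V(G)\setminus V_i$, so that $S\in\Gamma(u,v)$ and $S\cap V_i=\emptyset$. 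The instance of~\eqref{ineq:cut} for this triple $(u,v,S)$ and index $i$ then reads $x_{u,i}+x_{v,i}\le 1$, contradicting $x_{u,i}=x_{v,i}=1$. Hence every class is connected and $x=\xi(\{V_i\}_{i\in[k]})\in B$, completing $A=B$ and therefore the proposition.

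I expect the main obstacle to be precisely this last step: verifying that the disconnection of $G[V_i]$ is witnessed by some member of $\Gamma(u,v)$ that is \emph{disjoint} from $V_i$, i.e. that the ``minimal separator'' form of~\eqref{ineq:cut} is strong enough. The key bookkeeping is that a minimal $(u,v)$-separator excludes its own endpoints, which guarantees both that $V(G)\setminus V_i$ is a legitimate separator avoiding $u,v$ and that the path-meets-separator argument in the forward direction lands on an internal, hence correctly-labelled, vertex.
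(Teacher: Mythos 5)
Your proof is correct and follows essentially the same route as the paper's: for one containment you use connectedness of each class to force any minimal separator to intersect that class, and for the other you derive a contradiction by pruning $V(G)\setminus V_i$ to a minimal $(u,v)$-separator disjoint from $V_i$, exactly the witness the paper invokes. Your preliminary reduction from convex hulls to the underlying $0/1$ point sets is just a slightly more explicit packaging of the paper's extreme-point argument, not a different method.
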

\begin{proof}
  Consider first an extreme point~$x \in \polytope_k(G,w)$.  For each~$i \in [k]$, we define the set
  of vertices~$U_i = \{v \in V(G) \colon x_{v,i}=1\}$.  It follows from
  inequalities~\eqref{ineq:asymetric} and~\eqref{ineq:cover} that~$\mathcal{U}:= \{U_i\}_{i=1}^k$ is
  a $k$-subpartition of~$G$ such that~$w(U_i) \leq w(U_{i+1})$ for all~$i \in [k-1]$.
 
  To prove that $\mathcal{U}$ is a connected $k$-subpartition, we suppose to the contrary that there
  exists~$i \in [k]$ such that $G[U_i]$ is not connected.  Hence, there exist vertices~$u$ and~$v$
  belonging to two different components of~$G[U_i]$.  Moreover, there is a minimal set of
  vertices~$S$ that separates~$u$ and~$v$ and such that~$S\cap U_i = \emptyset$.  This implies that
  $x_{v,i} + x_{u,i} - \sum_{z \in S} x_{z,i} = x_{v,i} + x_{u,i} = 2$, a contradiction to the fact that ~$x$
  satisfies inequalities~\eqref{ineq:cut}.  Therefore, $\mathcal{U}$ is a connected $k$-subpartition
  of~$G$.
 
  To show the converse, consider now a connected $k$-subpartition~$\mathcal{V}=\{V_i\}_{i=1}^k$ of~$G$.
  By the definition of~$\xi(\mathcal{V})$, it is clear that this vector satisfies
  inequalities~\eqref{ineq:asymetric} and~\eqref{ineq:cover}.  Take a fixed~$i \in [k]$.  For
  every pair ~$u$,$v$ of non-adjacent vertices in~$V_i$, and every $(u,v)$-separator~$S$ in~$G$, it holds
  that~$S\cap V_i\neq \emptyset$, because ~$G[V_i]$ is connected.  Therefore, $\xi(\mathcal{V})$
  satisfies inequalities~\eqref{ineq:cut}.  Consequently, $\xi(\mathcal{V})$ belongs
  to~$\polytope_k(G,w)$.
\end{proof}

In the remainder of this section we present two further classes of
valid inequalities for $\polytope_k(G,w)$ that strenghten the
formulation~$\mathcal{C}_k(G,w)$.  We start showing a class that
dominates the class of inequalities~\eqref{ineq:cut}.

\begin{proposition}\label{prop:stronger-cut}
  Let~$(G,w)$ be an input for $\BCPk$.
  Let~$u$ and~$v$ be two non-adjacent vertices of~$G$, and let~$S$ be a minimal $(u,v)$-separator. 
  Let  $L= \left\{ z \in S \colon w(P_z)>\frac{w(G)}{k-i+1}\right\}$, where $P_z$ is a minimum-weight path between~$u$ and~$v$ in~$G$ that contains~$z$.
  For every~$i\in [k]$, the following inequality is valid for $\polytope_k(G,w)$:
\begin{equation}\label{ineq:asymetric:strong}
 x_{u,i} + x_{v,i} - \sum_{z \in S\setminus L} x_{z,i} \leq 1.
\end{equation}
\end{proposition}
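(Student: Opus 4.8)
The plan is to verify the inequality on the generators of $\polytope_k(G,w)$. Since the previous proposition shows that $\polytope_k(G,w)$ is the convex hull of the incidence vectors of connected $k$-subpartitions of~$G$, and \eqref{ineq:asymetric:strong} is linear, it suffices to check that $\xi(\mathcal{V})$ satisfies it for every connected $k$-subpartition $\mathcal{V}=\{V_j\}_{j=1}^k$ (recall our convention that $w(V_1)\le\cdots\le w(V_k)$). So I would fix such a $\mathcal{V}$ and an index $i\in[k]$, and write $x=\xi(\mathcal{V})$. If at most one of $u,v$ lies in~$V_i$, then $x_{u,i}+x_{v,i}\le 1$; since the weights are positive the subtracted term $\sum_{z\in S\setminus L} x_{z,i}$ is nonnegative, and the inequality holds trivially. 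Thus the only case requiring work is $x_{u,i}=x_{v,i}=1$, that is, $u,v\in V_i$.

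In that case I would argue that some vertex of $S\setminus L$ lies in~$V_i$, which forces $\sum_{z\in S\setminus L}x_{z,i}\ge 1$ and makes the left-hand side at most $2-1=1$. First, because $G[V_i]$ is connected there is a $(u,v)$-path~$P$ with $V(P)\subseteq V_i$; as $S$ separates $u$ from $v$ in~$G$, the path~$P$ must meet~$S$, say at a vertex $z\in S$. Then $P$ is a $(u,v)$-path through~$z$, so by the minimality defining $P_z$ we get $w(P_z)\le w(P)\le w(V_i)$, the last bound holding because $V(P)\subseteq V_i$ and the weights are nonnegative. The second ingredient is the bound $w(V_i)\le w(G)/(k-i+1)$: by the ordering convention the classes $V_i,V_{i+1},\dots,V_k$ each have weight at least $w(V_i)$, whence $w(G)\ge\sum_{j=i}^k w(V_j)\ge (k-i+1)\,w(V_i)$. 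Combining the two inequalities gives $w(P_z)\le w(G)/(k-i+1)$, so $z\notin L$, i.e. $z\in S\setminus L$; since $z\in V_i$ we have $x_{z,i}=1$, and the claim follows.

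The argument is short, and the single step that carries the content is the weight bound $w(V_i)\le w(G)/(k-i+1)$: this is exactly what lets us discard the vertices of~$L$ from the separator while keeping the inequality valid, and it is where the ordering constraints~\eqref{ineq:asymetric} (equivalently, the convention $w(V_1)\le\cdots\le w(V_k)$) are used in an essential way. I expect the main subtlety to be bookkeeping around the fact that $\mathcal{V}$ is only a subpartition, so $w(G)$ may strictly exceed $\sum_j w(V_j)$; this is harmless, since it only strengthens the bound. I would also double-check that $P_z$ is well defined for the relevant~$z$ — every $z\in S$ lying on some $(u,v)$-path admits a minimum-weight such path — so that $L$ is well defined and the comparison $w(P_z)\le w(P)$ is legitimate.
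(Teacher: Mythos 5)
Your proof is correct and follows essentially the same route as the paper's: both verify the inequality on incidence vectors of connected $k$-subpartitions, derive the key bound $w(V_i)\le w(G)/(k-i+1)$ from the ordering of class weights, and conclude that the $(u,v)$-path inside $G[V_i]$ must cross $S$ at a vertex lying outside~$L$. Your write-up merely makes explicit the path argument and the well-definedness of $P_z$, which the paper's more compressed proof leaves implicit.
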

\begin{proof}
Consider an extreme point~$x$ of $\polytope(G,w)$, and define~$V_i = \{v \in V(G) \colon x_{v,i}=1\}$ for each~$i \in [k]$.
Suppose to the contrary that there is~$ j \in [k]$ such that~$w(V_j)>\frac{w(G)}{k-j+1} $.
Since~$x$ satisfies inequalities~\eqref{ineq:asymetric}, it holds that $\sum_{i \in [k]\setminus [j-1]} w(V_i) > w(G)$, a contradiction.
Thus, if~$u$ and~$v$ belong to~$V_i$, then there exists a vertex~$z \in S\setminus L$ such that~$z$ also belongs to~$V_i$.
Therefore, $x$ satisfies inequality~\eqref{ineq:asymetric:strong}.
\end{proof}

Inspired by the inequalities devised by de~Aragão and~Uchoa~\cite{PogUch99} for a connected assignment problem, we propose the following class of inequalities for~$\polytope_k(G,w)$.

\begin{proposition}
  Let~$(G,w)$ be an input for $\BCPk$, and $q\geq 2$ be an integer.  Let $S$ be a subset of $V(G)$,
  $N(S)$ the set of neighbors of $S$, and ~$S'$ a subset of~$S$ containing~$q$ distinct pair of
  vertices $\{s_i,t_i\}$, $s_i\neq t_i$,  $i\in[q]$.  Moreover, let~$\sigma \colon [q] \to [k]$ be an injective
  function, and let~$I$ denote the image of~$\sigma$, that is,
  $I=\{\sigma(i) \in [k]\colon i \in [q]\}$.  If there is no collection of~$q$ vertex-disjoint
  $(s_i,t_i)$-paths in~$G[S]$, then the following inequality is valid
  for~$\polytope_k(G,w)$:
 \begin{equation}
   \sum_{i \in [q]} \left (x_{s_i, \sigma(i)} + x_{t_i, \sigma(i)} \right ) + \sum_{v \in N(S)} \sum_{i \in [k]\setminus I} x_{v,i} \leq 2q + |N(S)| -1 \enspace.\label{ineq:cross:general}
 \end{equation}
\end{proposition}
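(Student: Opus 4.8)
The plan is to verify validity on the extreme points of $\polytope_k(G,w)$, which by the preceding proposition are exactly the incidence vectors $\xi(\mathcal{V})$ of connected $k$-subpartitions $\mathcal{V}=\{V_i\}_{i=1}^k$ of $G$. Fixing such a subpartition and writing $x=\xi(\mathcal{V})$, I would first record the trivial upper bound on the left-hand side: the first sum is at most $2q$ since it has $2q$ terms, each a binary variable, and the second sum is at most $|N(S)|$ because $\sum_{i\in[k]}x_{v,i}\le 1$ by inequalities~\eqref{ineq:cover}. Hence the left-hand side never exceeds $2q+|N(S)|$, and the entire content of the claim is to exclude the single value $2q+|N(S)|$.

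I would then argue by contradiction, assuming the left-hand side attains $2q+|N(S)|$. Since both sums are simultaneously at their maxima, equality forces two structural facts. First, for every $i\in[q]$ we must have $x_{s_i,\sigma(i)}=x_{t_i,\sigma(i)}=1$, i.e.\ both $s_i$ and $t_i$ lie in the class $V_{\sigma(i)}$. Second, for every $v\in N(S)$ the inner sum $\sum_{i\in[k]\setminus I}x_{v,i}$ equals $1$, i.e.\ $v$ is assigned to some class whose index lies outside $I$; equivalently, no class $V_{\sigma(i)}$ with $i\in[q]$ contains any vertex of $N(S)$.

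Next I would extract the forbidden path system. Because $\mathcal{V}$ is a connected subpartition, each $G[V_{\sigma(i)}]$ is connected, so from the first fact there is an $(s_i,t_i)$-path $P_i$ inside $G[V_{\sigma(i)}]$. Since $\sigma$ is injective and the classes of a subpartition are pairwise disjoint, the classes $V_{\sigma(1)},\dots,V_{\sigma(q)}$ are pairwise disjoint, and therefore the paths $P_1,\dots,P_q$ are pairwise vertex-disjoint.

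The step I expect to be the crux is showing that each $P_i$ actually lies in $G[S]$ rather than merely in $G$; this is precisely where the $N(S)$ term earns its place. Here I would invoke the second fact: since $P_i\subseteq V_{\sigma(i)}$ and $V_{\sigma(i)}\cap N(S)=\emptyset$, the path $P_i$ avoids $N(S)$ entirely. As $s_i,t_i\in S$, if $P_i$ ever left $S$ it would contain a first vertex outside $S$, which would be adjacent to the preceding (still in $S$) vertex of $P_i$ and hence belong to $N(S)$---contradicting that $P_i$ avoids $N(S)$. Thus every $P_i$ stays inside $S$, so $P_1,\dots,P_q$ are $q$ pairwise vertex-disjoint $(s_i,t_i)$-paths in $G[S]$, contradicting the hypothesis that no such collection exists. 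This establishes the bound and hence the validity of inequality~\eqref{ineq:cross:general}.
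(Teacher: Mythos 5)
Your proof is correct and follows essentially the same route as the paper's: argue by contradiction on a 0/1 extreme point, bound the two sums by $2q$ and $|N(S)|$ using inequalities~\eqref{ineq:cover}, observe that equality forces each pair $\{s_i,t_i\}$ into class $V_{\sigma(i)}$ while every vertex of $N(S)$ lands in a class outside $I$, and extract $q$ vertex-disjoint $(s_i,t_i)$-paths to contradict the hypothesis. The only difference is that you spell out the final step---why connectivity of the classes and their disjointness from $N(S)$ force the paths to stay inside $G[S]$---which the paper asserts without detail.
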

\begin{proof}
  Suppose, to the contrary, that there exists an extreme point~$x$ of~$\polytope_k(G,w)$ that
  violates inequality~\eqref{ineq:cross:general}.  Let
  $A = \sum_{i \in [q]} \left (x_{s_i, \sigma(i)} + x_{t_i, \sigma(i)} \right )$ and
  $B = \sum_{v \in N(S)} \sum_{i \in [k]\setminus I} x_{v,i}$.  From
  inequalities~\eqref{ineq:cover}, we have that~$A \leq 2q$. Since $x$ violates~\eqref{ineq:cross:general}, it follows
  that $B > |N(S)| - 1$.  Thus~$B = |N(S)|$ (because $x$ satisfies inequalities~\eqref{ineq:cover}).
  Hence, every vertex in $N(S)$ belongs to a class that is different from those indexed by~$I$.
  This implies that every class indexed by $I$ contains precisely one of the $q$ distinct pairs
  $\{s_i,t_i\}$. Therefore, there exists a collection of $q$~vertex-disjoint $(s_i,t_i)$-paths
  in~$G[S]$, a contradiction.
\end{proof}

Kawarabayashi et~al.~\cite{KawKobRee12} proved that, given an $n$-vertex graph~$G$ and a set of~$q$ pairs of terminals in~$G$, the problem of deciding whether~$G$ contains~$q$ vertex-disjoint paths linking the given pairs of terminals can be solved in time~$\bigO(n^2)$.
Hence, inequalities~\eqref{ineq:cross:general} can be separated in polynomial time when~$S=V(G)$.

\section{Polyhedral results for $\BCPkUnw$}
\label{section:cut-polytope}

In this section we focus on $\BCPkUnw$, the special case of $\BCPk$ in
which all vertices have unit weight. In this case, instead of
$\polytope_k(G,w)$, we simply write $\mathcal{P}_k(G)$, the polytope
defined as the convex hull of~$\{ x \in \B^{kn} : x \text{ satisfies } \eqref{ineq:asymetric}
\text{ - } \eqref{ineq:cut} \}$.

Note that, if the input graph $G$ has no matching of size~$k$, then $G$ has no feasible connected $k$-subpartition~$\{V_i\}_{i \in [k]}$ such that~$|V_i|\geq 2$ for all~$i \in [k]$, and thus~$\opt(G)=1$, and it is easy to find an optimal solution. Thus, we assume from now on that~$G$ has a matching of size~$k$ (a property that can be checked in polynomial time~\cite{edmonds1965paths}), and
that~$n\geq 2k$, where~$n:=|V(G)|$.

For each ~$v \in V(G)$ and~$i \in [k]$ we shall construct a binary vector~$\chi(v,i)$ that belongs to~$\mathcal{P}_k(G)$.
Let us denote by~$e(v,i) \in \B^{nk}$ the unit vector  such that its single non-null entry is indexed by~$v$ and~$i$.
Now consider any set $S \subseteq V(G)\setminus\{v\}$, $|S|=k-i$, and  a
bijective function~$\nu \colon S \to [k]\setminus [i]$. Since
$n\geq 2k$, such a set  $S$ exists.  Fix a pair $(S,\nu)$, where $S$ and $\nu$ are as
previously defined. 
Let~$\chi(v,i) \in \B^{nk}$ be the vector $e(v,i)+ \sum_{u \in S} e(u,\nu(u))$.  Note that
$\chi(v,i)$ belongs to~$\polytope_k(G)$,  it is the incidence vector of a $k$-subpartition, say
${\mathcal S}_i=\{S_i, \ldots, S_k\}$ of $G$, in which $v$ belongs to the class $S_i$, and each
vertex of $S\subseteq V(G)\setminus \{v\}$ belongs to one of the classes $S_{i+1},\ldots,S_k$, all
of which are singletons. 

  To be rigorous, we should write $\chi^{s,\nu}(v,i)$ as different choices of $S$ and $\nu$ give rise to different vectors, but we simply write $\chi(v,i)$ with the understanding that it refers to some $S$ and bijection $\nu$.

\begin{proposition}\label{prop:full-dimensional}
$\polytope_k(G)$ is full-dimensional, that is, $\dim(\polytope_k(G))=kn$.
\end{proposition}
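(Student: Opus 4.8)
To show $\polytope_k(G)$ is full-dimensional, I would exhibit $kn+1$ affinely independent points of the polytope. Since $\polytope_k(G) \subseteq \R^{kn}$, this suffices. The natural candidates are the vectors $\chi(v,i)$ constructed just before the statement: there are exactly $kn$ of them (one for each pair $(v,i) \in V(G) \times [k]$), each lying in $\polytope_k(G)$ by construction. Adding one more point — either the zero vector $\mathbf{0}$ (the incidence vector of the empty $k$-subpartition, which trivially satisfies \eqref{ineq:asymetric}--\eqref{ineq:cut}) or one extra $\chi$-type vector — would give the required $kn+1$ points.

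**Main steps.** First I would confirm that $\mathbf{0} \in \polytope_k(G)$: it satisfies \eqref{ineq:asymetric} (both sides zero), \eqref{ineq:cover} (left side zero), and \eqref{ineq:cut} (left side zero), so it is a legitimate point. Next, the core of the argument is to prove affine independence of the set $\{\mathbf{0}\} \cup \{\chi(v,i) : v \in V(G),\, i \in [k]\}$. Since $\mathbf{0}$ is among the points, affine independence of the whole set reduces to \emph{linear} independence of the $kn$ vectors $\chi(v,i)$. So I would set up the equation $\sum_{v,i} \lambda_{v,i}\,\chi(v,i) = \mathbf{0}$ and show all coefficients vanish. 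The key structural observation is that $\chi(v,i)$ has its ``primary'' entry at coordinate $(v,i)$ equal to $1$, while its remaining nonzero entries sit at coordinates $(u,j)$ with $j > i$ (the singleton classes $S_{i+1},\dots,S_k$). This near-triangular structure is what I would exploit.

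**Extracting the coefficients.** The cleanest way to force the $\lambda_{v,i}$ to zero is to process the indices $i$ from the top class $k$ downward. For $i = k$, the vector $\chi(v,k) = e(v,k)$ is just the unit vector at $(v,k)$ (since $S = \emptyset$ when $i=k$), and no other $\chi(u,j)$ with $j < k$ can place mass in class $k$ \emph{at the distinguished coordinate} — but here I must be careful, because a $\chi(u,j)$ with $j<k$ \emph{does} place singleton mass in class $k$. This is the subtle point. To handle it, I would instead look at the total mass each $\chi(v,i)$ contributes to a fixed class $j$, and use a counting/weight argument: $\chi(v,i)$ contributes exactly $1$ to each class $i, i+1, \dots, k$ and $0$ to classes $1,\dots,i-1$. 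Summing the linear-dependence relation coordinate-wise over all vertices in class $j$ then yields, for each $j$, a relation among the $\lambda_{v,i}$ with $i \le j$, and peeling from $j=1$ upward (or $j=k$ downward) forces each batch of coefficients to zero.

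**The main obstacle.** The hard part is exactly that the singleton-filler coordinates of $\chi(v,i)$ overlap unpredictably with the primary coordinates of other $\chi(u,j)$, so a naive ``read off the unit coordinate'' argument fails. The set $S$ and bijection $\nu$ are chosen \emph{arbitrarily} for each $\chi(v,i)$, so I cannot assume the fillers are coordinated across different base pairs. I expect the resolution is to choose, coordinate by coordinate, a witnessing coordinate that isolates each $\lambda_{v,i}$; concretely, for the pair $(v,i)$ the coordinate $(v,i)$ itself is hit by $\chi(v,i)$ and possibly by some $\chi(u,j)$ with $j < i$ acting as a filler — so I would either (a) argue that the freedom in choosing $S,\nu$ lets us fix a single consistent family of $\chi$'s whose filler coordinates avoid all primary coordinates, or (b) run the downward induction on $i$ described above, at each stage subtracting off the already-nullified higher-class contributions. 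Either route is routine once the triangular class-structure is made explicit; the only real work is bookkeeping the overlaps correctly.
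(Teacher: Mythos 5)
Your frame matches the paper's: the $kn$ vectors $\chi(v,i)$, the origin as the $(kn+1)$-st point, and the reduction of affine independence to linear independence of the $\chi(v,i)$ (your explicit use of the zero vector is in fact a point the paper leaves implicit). You also isolate the correct structural fact: the fillers of $\chi(v,i)$ occupy only coordinates $(u,j)$ with $j>i$. But the linear-independence step is never correctly completed, and all three routes you sketch fail. (i) The counting argument---summing the dependence relation over all coordinates of a fixed class $j$---yields, after peeling, only the single equation $\sum_{v}\lambda_{v,j}=0$ for each $j$: it forces the batch \emph{sums} to vanish, not the individual coefficients, so it cannot conclude. (ii) Your option (a) is impossible: every coordinate $(u,j)$ is the primary coordinate of $\chi(u,j)$, so the $k-i$ filler entries of each $\chi(v,i)$ with $i<k$ necessarily sit on primary coordinates of other vectors; no choice of $S$ and $\nu$ avoids this. (iii) Your option (b), the downward induction on $i$, runs in the wrong direction: the contamination at a coordinate $(v,i)$ comes from fillers of $\chi(u,j)$ with $j<i$, i.e.\ from \emph{lower} classes, which in a downward pass have not yet been nullified. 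Already at the first step, coordinate $(v,k)$ only gives $\lambda_{v,k}+(\text{unknown lower-class filler coefficients})=0$.

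The missing observation---and it is exactly how the paper closes the argument---is that contamination flows only \emph{upward} in class index, so the induction must go up, not down. Class-$1$ coordinates are hit by no filler at all (fillers live in classes $\geq 2$), so reading off coordinate $(v,1)$ gives $\lambda_{v,1}=0$ for every $v$; then the only possible contamination of class-$2$ coordinates comes from the now-nullified class-$1$ vectors, giving $\lambda_{v,2}=0$; and so on up to class $k$. Equivalently, in the paper's phrasing: order the coordinates class-major as $(v_1,1),\dots,(v_n,1),\dots,(v_1,k),\dots,(v_n,k)$ and take the matrix whose columns are the $\chi(v,i)$ in the same order; this matrix is lower triangular with unit diagonal, hence nonsingular. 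The arbitrariness of $S$ and $\nu$ that worried you is harmless precisely because every filler entry falls strictly below the diagonal. With that one observation your outline becomes the paper's proof; without it, the proposal has a genuine gap at its only nontrivial step.
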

\begin{proof}
  Let~$X$ be the set of~$kn$ vectors previously defined, that is,
  ~$X=\{\chi(v,i) \in \B^{nk} \colon v \in V(G) \text{ and } i \in [k]\}$.  Assume that
  $V(G)=\{v_1, \ldots, v_n\}$.  We suppose, with no loss of generality, that the indices of a vector
  $x$ in~$X$ are ordered as~$(x_{v_1,1}, \ldots, x_{v_n,1}, \ldots, x_{v_1,k}, \ldots, x_{v_n,k})$.

  Let $M$ be the matrix whose columns are precisely the $nk$ vectors in $X$ (in the following
  order): $\chi(v_1,1), \ldots, \chi(v_n,1), \ldots, \chi(v_1,k), \ldots, \chi(v_n,k)$. One may
  easily check that~$M$ is a lower triangular square matrix of dimension~$kn$.  Note that the
  columns of~$M$ are precisely the vectors in~$X$.  Hence the vectors in~$X$ are linearly
  independent. 
Since $X \subseteq \polytope_k(G)$, we conclude that~$\dim(\polytope_k(G))=nk$, that is, $\polytope_k(G)$ is full-dimensional.
\end{proof}


In the forthcoming proofs, we have to refer to some connected
$k$-subpartitions of $G$, defined (not uniquely) in terms of distinct
vertices $u$, $v$ of $G$, and specific classes $i$, $j$, where
$i<j$. For that, we define a short notation to represent the incidence
vectors of these connected $k$-subpartitions. Given such~$u$, $v$,
and~$i$, $j$, choose two set of vertices~$S$ and~$\hat S$ in~$G$, both
of cardinality $k-i+1$, and
bijections~$\pi \colon S \to \{i, \ldots, k\}$
and~$ \hat \pi \colon \hat S \to \{i, \ldots, k\}$ such that

\begin{enumerate}[(i)]
\item $u \in S\cap \hat S$  and  $v \in S \setminus \hat S$;
 \item $\pi(u)= \hat \pi(u)=i$  and  $\pi(v)= j$.
\end{enumerate}

Let~$\phi(u,i,v,j)$ and $\psi(u,i,v)$ be vectors in $\{0,1\}^{nk}$ such
that their non-null entries are precisely:
$\phi(u,i,v,j)_{z, \pi(z)} = 1$ for every $z \in S$, and
$\psi(u,i,v)_{z, \hat \pi(z)} = 1$ for every \hbox{$z \in \hat S$.}  The
vectors~$\phi(u,i,v,j)$ and~$\psi(u,i,v)$ clearly belong
to~$\polytope_k(G)$. Moreover, note 
that~$\phi(u,i,v,j)_{u,i}=\phi(u,i,v,j)_{v,j}=\psi(u,i,v)_{u,i}=1$
and~$\psi(u,i,v)_{v,\ell}=0$ for all~$\ell \in [k]$.

\begin{proposition}\label{prop:facet:non-negative}
For every~$v \in V(G)$ and~$i \in [k]$, the inequality~$x_{v,i}\geq 0$ induces a facet of~$\polytope_k(G)$.
\end{proposition}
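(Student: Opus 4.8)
The plan is to show that the face $F = \{x \in \polytope_k(G) : x_{v,i} = 0\}$ is a proper facet by exhibiting $nk - 1$ affinely independent points of $\polytope_k(G)$ that all satisfy $x_{v,i} = 0$ with equality. Since Proposition~\ref{prop:full-dimensional} gives $\dim(\polytope_k(G)) = nk$, a facet must have dimension $nk - 1$, so producing $nk-1$ affinely independent points lying on the hyperplane $x_{v,i}=0$ suffices. It is also immediate that the inequality is valid (all points of $\polytope_k(G)$ are convex combinations of $0/1$ vectors, so every coordinate is nonnegative) and that $F \neq \polytope_k(G)$ (the vertex $\chi(v,i)$ constructed before Proposition~\ref{prop:full-dimensional} has $\chi(v,i)_{v,i}=1 \neq 0$), so $F$ is a proper face.

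**First I would** build the required points by reusing the $\chi(u,j)$ vectors. Recall $\chi(u,j)$ is the incidence vector of a $k$-subpartition in which $u$ sits in class $j$ and the remaining classes are singletons drawn from a set $S \subseteq V(G)\setminus\{u\}$. The natural candidate set is $\{\chi(u,j) : (u,j) \neq (v,i)\}$, which has exactly $nk-1$ elements. The key point is that each of these can be realized with its $(v,i)$-entry equal to $0$: when constructing $\chi(u,j)$ for $(u,j)\neq(v,i)$, I have freedom in choosing the auxiliary set $S$ and the bijection $\nu$, and I must argue this freedom lets me avoid ever placing vertex $v$ into class $i$. Concretely, whenever $u \neq v$ I may choose $S$ to exclude $v$ (possible since $n \geq 2k$ leaves enough vertices), so $v$ appears in no class at all; and when $u = v$ but $j \neq i$, the vertex $v$ is placed in class $j \neq i$, so again the $(v,i)$-coordinate is $0$.

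**The main obstacle**, and the step deserving the most care, is establishing affine independence of these $nk-1$ points rather than mere linear independence. I would handle this by mimicking the triangularity argument of Proposition~\ref{prop:full-dimensional}: ordering the coordinates and the chosen vectors compatibly so that the matrix whose rows are these incidence vectors is (after deleting the fixed $(v,i)$ column) essentially lower triangular with nonzero diagonal. Since all these vectors satisfy $x_{v,i}=0$, they lie in the coordinate hyperplane, and within that hyperplane linear independence of $nk-1$ vectors forces affine independence provided none is the zero vector in the ambient space; more robustly, I would verify directly that the only affine dependence $\sum_\alpha \lambda_\alpha \chi(u_\alpha,j_\alpha) = 0$ with $\sum_\alpha \lambda_\alpha = 0$ is the trivial one, again by reading off coefficients from the triangular structure. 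Once affine independence is confirmed, $\dim(F) \geq nk-1$, and combined with properness this yields $\dim(F) = nk-1$, so $x_{v,i}\geq 0$ defines a facet.
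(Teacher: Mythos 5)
Your construction is essentially the paper's own: you reuse the $\chi$-type vectors, force the $(v,i)$-entry to zero by choosing the auxiliary set $S$ to avoid $v$ (the paper packages exactly this as the vectors $\psi(u,j,v)$, which place $u$ in class $j$ and leave $v$ out of every class), and you get independence from the same triangularity argument as Proposition~\ref{prop:full-dimensional}. However, there is a genuine off-by-one gap in your dimension count. Since $\polytope_k(G)\subseteq \R^{nk}$ is full-dimensional, a facet has dimension $nk-1$, and certifying that a face has dimension $nk-1$ requires $nk$ affinely independent points in it, not $nk-1$: the affine hull of $m$ affinely independent points has dimension $m-1$. Your $nk-1$ points are linearly independent, hence affinely independent, but that only yields $\dim(F)\geq nk-2$. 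Linear independence does not rescue the count: precisely because the points are linearly independent, the origin is \emph{not} in their affine hull, so that affine hull is a proper $(nk-2)$-dimensional affine subspace of the hyperplane $\{x\colon x_{v,i}=0\}$, even though the points span that hyperplane linearly. So the final step ``affine independence of $nk-1$ points implies $\dim(F)\geq nk-1$'' is false as stated.

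The missing ingredient is one more point of the face, affinely independent from yours, and it is free: the null vector. It is a binary vector satisfying inequalities \eqref{ineq:asymetric}--\eqref{ineq:cut}, hence lies in $\polytope_k(G)$, and it trivially satisfies $x_{v,i}=0$; moreover, appending the origin to a linearly independent set of points always produces an affinely independent set. This is exactly what the paper does: it takes the null vector together with the $nk-1$ vectors in $X_1=\{\chi(v,j)\colon j\in[k]\setminus\{i\}\}$ and $X_2=\{\psi(u,j,v)\colon u\in V(G)\setminus\{v\},\, j\in[k]\}$, obtaining $nk$ affinely independent points on the face. With this single addition your argument goes through verbatim; without it, the proof is incomplete.
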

\begin{proof}
  Similarly to the proof of Proposition~\ref{prop:full-dimensional}, let~$X_1=\{\chi(v,j) \in \B^{nk} \colon j \in [k]\setminus \{i\}\}$. 
  Additionally, we define~$X_2 = \{\psi(u,j,v) \in \B^{nk} \colon u \in V(G)\setminus \{v\} \text{ and } j \in [k]\}$.
  Note that $|X_1 \cup X_2|= nk-1$.
  Since the null vector and all vectors in~$X_1 \cup X_2$ are all affinely independent, and they all belong to the face~$\{ x \in \B^{nk} \colon x_{v,i}=0 \}$, we conclude that the inequality~$x_{v,i}\geq 0$ induces a facet of~$\polytope_k(G)$.
 \end{proof}


In what follows, considering that the polytope $\polytope_k(G)$ is
full-dimensional, to prove that a face
$\hat F = \{ x \in \polytope_k(G) \colon \hat\lambda x = \hat
\lambda_0\}$ is a facet of~$\polytope_k(G)$, we show that if a
nontrivial face~$F=\{ x \in \polytope_k(G) \colon \lambda x = \lambda_0\}$
of~$\polytope_k(G)$ contains~$\hat F$, then there exists a real
positive constant~$c$ such that~$\lambda = c \hat \lambda$
and~$\lambda_0 = c \hat \lambda_0$.
 
\begin{proposition}
 For every~$v \in V(G)$, the inequality  $\sum_{i \in [k]} x_{v,i} \leq 1$ induces a facet of~$\polytope_k(G)$.
\end{proposition}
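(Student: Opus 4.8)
The plan is to apply the facet criterion set up just above the statement, using that $\polytope_k(G)$ is full-dimensional (Proposition~\ref{prop:full-dimensional}). Let $\hat\lambda$ be the vector whose only nonzero entries are $\hat\lambda_{v,i}=1$ for $i\in[k]$, so that $\hat F=\{x\in\polytope_k(G)\colon \sum_{i\in[k]}x_{v,i}=1\}$ is the face under study and $\hat\lambda_0=1$. Assuming a nontrivial face $F=\{x\in\polytope_k(G)\colon \lambda x=\lambda_0\}$ contains $\hat F$, I would prove that $\lambda$ is a positive multiple of $\hat\lambda$. Everything reduces to two facts: (A) $\lambda_{u,j}=0$ for every $u\in V(G)\setminus\{v\}$ and every $j\in[k]$; and (B) $\lambda_{v,1}=\cdots=\lambda_{v,k}=\lambda_0$. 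The whole difficulty sits in~(A); once it is available, (B) and the value of $\lambda_0$ come almost for free.

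For~(A) the idea is, for each fixed $u\neq v$ and $j\in[k]$, to exhibit two connected $k$-subpartitions whose incidence vectors $x^1,x^2$ both lie in $\hat F$ and satisfy $x^1-x^2=e(u,j)$; since $\lambda x^1=\lambda x^2=\lambda_0$, this forces $\lambda_{u,j}=0$. Concretely $x^1$ places $u$ in class $j$ and assigns $v$ to exactly one class, and $x^2$ is the same subpartition with $u$ deleted. The only delicate point is keeping $x^2$ feasible, i.e.\ respecting the weight-ordering inequalities~\eqref{ineq:asymetric} and connectivity after the deletion. When $j\le k-1$ I would take classes $1,\dots,j-1$ empty, class $j$ equal to the singleton $\{u\}$, and classes $j+1,\dots,k$ singletons with $v$ placed in class $k$; deleting $u$ merely empties class $j$ as well, and both size profiles are nondecreasing, so $x^2$ is feasible. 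This uses only $n\ge 2k$ to supply enough distinct vertices for the singletons.

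The truly delicate case is $j=k$, where class $k$ is the heaviest and cannot be emptied without violating~\eqref{ineq:asymetric}; I expect this to be the main obstacle. Here I would instead keep class $k$ nonempty after the deletion: choosing a neighbour $w$ of $u$ (which exists since $G$ is connected), let class $k$ be $\{u,w\}$ and let classes $1,\dots,k-1$ be singletons with $v$ among them, giving the nondecreasing profile $1,\dots,1,2$; deleting $u$ leaves the connected class $\{w\}$ and profile $1,\dots,1,1$. If the only neighbour of $u$ is $v$ itself, I would instead use class $k=\{u,v\}$, so that $v$ stays assigned (now in class $k$) in both subpartitions. In every case $x^1,x^2\in\hat F$ and $x^1-x^2=e(u,k)$, so $\lambda_{u,k}=0$.

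Finally, with~(A) in hand I would evaluate $\lambda$ on the vectors $\chi(v,i)$, $i\in[k]$, each of which lies in $\hat F$. Since $\chi(v,i)=e(v,i)+\sum_{u\in S}e(u,\nu(u))$ has all its non-$v$ coordinates annihilated by $\lambda$ because of~(A), we get $\lambda\,\chi(v,i)=\lambda_{v,i}$, and this equals $\lambda_0$ as $\chi(v,i)\in\hat F\subseteq F$. Hence $\lambda_{v,1}=\cdots=\lambda_{v,k}=\lambda_0=:c$, which is precisely $\lambda=c\,\hat\lambda$ and $\lambda_0=c\,\hat\lambda_0$. To pin down the sign, note that $\lambda x\le\lambda_0$ is valid; evaluating it at a vector with $v$ unassigned, such as $\psi(u,1,v)$ (for which $\sum_{i}x_{v,i}=0$), gives $0\le c$, and $c\neq 0$ since $F$ is nontrivial, so $c>0$. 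Therefore $\sum_{i\in[k]}x_{v,i}\le 1$ induces a facet of $\polytope_k(G)$.
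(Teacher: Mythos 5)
Your proof is correct, and for the crucial part (A) it takes a genuinely different route from the paper's. The paper first establishes $\lambda_{u,k}=0$ for all $u\neq v$ by a telescoping argument along a chain of nested connected subgraphs $G_1\subset G_2\subset\cdots\subset G_n=G$, all containing $v$ and each placed entirely in class $k$ (consecutive incidence vectors differ in exactly one coordinate); it then runs a downward induction on the class index $\ell$, using the vectors $\phi(u,\ell-1,v,k)$ and $\chi(v,\ell-1)$, where the induction hypothesis is needed precisely because those vectors have nonzero entries in all classes $\ell-1,\ldots,k$. You avoid both the spanning-tree construction and the induction by producing, for each pair $(u,j)$ separately, two subpartitions in $\hat F$ whose incidence vectors differ exactly in the coordinate $(u,j)$: singleton $\{u\}$ deleted when $j\le k-1$, and, when $j=k$, the pairing of $u$ with a neighbour $w$ so that the heaviest class stays nonempty --- correctly identifying that inequalities~\eqref{ineq:asymetric} forbid emptying class $k$, and covering the degenerate case where $v$ is the only neighbour of $u$. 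Both arguments finish identically by evaluating $\lambda$ on $\chi(v,i)$ to get $\lambda_{v,i}=\lambda_0$; you additionally spell out the sign argument ($\lambda_0\ge 0$ from validity of $\lambda x\le\lambda_0$ at a point where $v$ is unassigned, and $\lambda_0\neq 0$ from nontriviality of $F$), which the paper leaves implicit. The trade-off: the paper's induction extracts $\lambda_{u,\ell}=0$ and $\lambda_{v,\ell}=\lambda_0$ in one sweep, while your one-coordinate differencing is more elementary and locally verifiable, at the cost of a small case analysis at $j=k$; your version also uses connectivity of $G$ only to find a neighbour of $u$, rather than to build a full nested chain of connected subgraphs.
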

\begin{proof}
  Fix a vertex~$v\in V(G)$. Let $\hat F_v = \{ x \in \polytope_k(G) \colon \hat\lambda x = \hat \lambda_0\}$, where~$\hat \lambda x \leq \hat \lambda_0$ corresponds to~$\sum_{i \in [k]} x_{v,i} \leq 1$. Let~$F=\{ x \in \polytope_k(G) \colon \lambda x = \lambda_0\}$ be a nontrivial face of~$\polytope_k(G)$ such that $\hat F_v \subseteq F$.  We shall prove that~$\lambda_{v,i}= \lambda_0$ and~$\lambda_{u, i} = 0$ for every~$u \in V(G)\setminus\{v\}$ and~$i \in [k]$.

 Since $G$ is nontrivial and connected, it is easy to see that $G$ has a set of $n$ nested connected subgraphs $G_1, G_2, \ldots, G_n$  such that $G_1$ consists solely of the vertex $v$, each~$G_j \subset G_{j+1}$ for~$j=2,\ldots n-1$, and $G_n=G$. (It suffices to consider a spanning tree in $G$, and starting from $v$,  define the subsequent  subgraphs by adding  at each step an appropriate edge and vertex from this spanning tree.)

Consider the set of vectors~$A=\{e(G_j, k)\}_{j\in [n]}$, where~$e(G_j,k) = \sum_{u \in V(G_j)} e(u,k)$ for every~$j \in [n]$.
Since~$v \in V(G_j)$ for all~$j \in [n]$, it follows that~$A \subseteq \hat F_v$.
As a consequence,~$\lambda_{u,k}=0$ for all~$u \in V(G)\setminus\{v\}$.
Additionally,~$\lambda_{v,k}=\lambda_0$ since~$e(v,k) \in \hat F_v$.

Let~$\ell \in [k]\setminus\{1\}$.  Suppose
that~$\lambda_{v,i}=\lambda_0$ and~$\lambda_{u,i}=0$ for
every~$u \in V(G)\setminus\{v\}$ and~$i \in \{\ell, \ldots, k\}$. Now define the set of
vectors~$B=\{\phi(u,\ell-1,v,k)\colon u \in V(G)\setminus\{v\}\}$.
Note that~$B\subseteq \hat F_v$,  since~$v$ belongs to exactly one
class of the partition corresponding to each vector in~$B$.
By the induction hypothesis, we obtain~$\lambda_{u,\ell-1}=0$ for each~$u \in V(G)\setminus\{v\}$.
Moreover, observe that $\chi(v,\ell-1)$ belongs to~$\hat F_v$.
It follows from the induction hypothesis that~$\lambda (\chi(v,\ell-1))= \lambda_{v,\ell-1} =\lambda_0$.

Therefore, we conclude that~$\lambda = \lambda \left(\sum_{i \in [k]} e(v,i) \right) = \lambda_0 \hat \lambda$.
Since $\lambda_0 \neq 0$ (otherwise $F$ would be a trivial face), it follows that $\lambda x \leq \lambda_0$ is a multiple scalar of
$\hat \lambda x \leq \hat \lambda_0$, and therefore~$\hat F_v$ is a facet of~$\polytope_k(G)$.
\end{proof}

Let~$u$ and~$v$ be two non-adjacent vertices of~$G$ and let~$S$ be a minimal
$(u,v)$-separator in~$G$.  We denote by~$H_u$ and~$H_v$ the components of~$G-S$ which
contain~$u$ and~$v$, respectively.  Since $S$ is minimal, it follows that every vertex
in~$S$ has at least one neighbor in~$H_u$ and one in~$H_v$.

In this context of minimal $(u,v)$-separator~$S$, for
every~$z \in V(G)$, the following two concepts (and notation) will be
important for the next result. 

We denote by~$G_z$ a minimum size connected subgraph of~$G$
containing~$z$, with the following properties: If~$z \in V(H_v)$, then
$G_z$ is contained in $H_v$; if~$z \in V(H_u)$, then $G_z$ is
contained in $H_u$.  Otherwise, $G_z$ contains~$u$,~$v$ and exactly
one vertex of~$S$, that is,~$|V(G_z) \cap S|=1$.  Clearly, such a
subgraph always exists. Moreover, if $z\notin S\cup\{u,v\}$, then the
subgraph $G_z- z$ is connected (that is, $z$ is not a cutvertex of
$G_z$).

For any  integer~$ i \in [k]$, we say that~$G$ admits a \emph{$(u,v,S,i)$-robust
   subpartition} if, for each~$z \in V(G)\setminus \{u,v\}$, there is a connected
 $(k-i)$-subpartition~$\{V_j\}_{j \in [k]\setminus [i]}$ of~$G-G_z$ such that
 $|V(G_z)|\leq |V_j|$ for all~$j \in [k] \setminus [i]$.

\begin{theorem}
Let~$u$ and~$v$ be non-adjacent vertices in~$G$, let~$S$ be a minimal $(u,v)$-separator, and let~$i \in [k]$.
Then $G$ admits a $(u,v,S,i)$-robust subpartition  if and only if the following inequality induces a facet of~$\polytope_k(G)$:
\[ x_{u,i} + x_{v,i} - \sum_{s \in S} x_{s,i}  \leq 1 \enspace.\]
\end{theorem}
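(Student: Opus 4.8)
The plan is to prove both directions of the equivalence, separating the ``only if'' (robustness $\Rightarrow$ facet) and the ``if'' (facet $\Rightarrow$ robustness) directions. The harder and more substantial direction will be the forward one: assuming $G$ admits a $(u,v,S,i)$-robust subpartition, show that the face $\hat F = \{x \in \polytope_k(G) \colon x_{u,i}+x_{v,i}-\sum_{s\in S}x_{s,i}=1\}$ is a facet. Following the template established in the preceding propositions, I would take an arbitrary nontrivial face $F = \{x \colon \lambda x = \lambda_0\}$ with $\hat F \subseteq F$ and prove that $\lambda$ is a positive scalar multiple of $\hat\lambda$, where $\hat\lambda$ has entries $\hat\lambda_{u,i}=\hat\lambda_{v,i}=1$, $\hat\lambda_{s,i}=-1$ for $s\in S$, and $0$ elsewhere, with $\hat\lambda_0=1$. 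Concretely, I must show $\lambda_{u,i}=\lambda_{v,i}=\lambda_0$, that $\lambda_{s,i}=-\lambda_0$ for all $s\in S$, and that every other coordinate $\lambda_{z,j}=0$.

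\textbf{The key steps for the forward direction.} First I would exhibit enough feasible points lying in $\hat F$ to pin down the coordinates not involving class $i$. For a vertex $z \notin \{u,v\}$ and a class $j \neq i$, I would construct two incidence vectors in $\hat F$ that differ only in whether $z$ is assigned to class $j$; taking their difference and using $\lambda x = \lambda_0$ on both forces $\lambda_{z,j}=0$. The robustness hypothesis is exactly what guarantees such vectors exist: given $z$, the robust subpartition supplies a connected $(k-i)$-subpartition of $G-G_z$ into classes $i+1,\dots,k$ whose parts are each at least as large as $G_z$, so I can place $u,v$ (together with one separator vertex, via the subgraph $G_z$ when $z\in S$, or via an appropriate connected set realizing $x_{u,i}+x_{v,i}-\sum_s x_{s,i}=1$) into class $i$ and obtain a valid tight point that does or does not contain $z$ in class $j$. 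I would use the auxiliary vectors $\phi$, $\psi$, $\chi$ and $e(G_\cdot,\cdot)$ already defined in the excerpt to make these constructions explicit. Once all off-class-$i$ coordinates and all class-$i$ coordinates outside $S\cup\{u,v\}$ are shown to vanish, the remaining equations relating $\lambda_{u,i},\lambda_{v,i}$ and the $\lambda_{s,i}$ follow by comparing the point that puts $u,v$ in class $i$ through a single separator vertex $s$ (giving $\lambda_{u,i}+\lambda_{v,i}-\lambda_{s,i}=\lambda_0$ for each $s$) against points that isolate $u$ or $v$ in class $i$.

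\textbf{The reverse direction} is the contrapositive: if $G$ does \emph{not} admit a $(u,v,S,i)$-robust subpartition, I show $\hat F$ is not a facet, by producing a valid inequality that is tight on all of $\hat F$ yet is not a scalar multiple of $\hat\lambda$ (equivalently, by exhibiting a coordinate whose value is forced to be constant over $\hat F$, so $\hat F$ lies in a strictly smaller face). The failure of robustness means there is a vertex $z$ for which no connected $(k-i)$-subpartition of $G-G_z$ has all parts of size at least $|V(G_z)|$; this combinatorial obstruction should translate into an additional equation satisfied by every tight point, lowering the dimension of $\hat F$ below $kn-1$.

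\textbf{The main obstacle} I anticipate is the forward direction's bookkeeping: verifying that the points I construct genuinely lie in $\hat F$ (i.e.\ are incidence vectors of connected $k$-subpartitions that are tight for the inequality) requires careful use of the connectivity of $G_z$ and $G_z-z$ and of the size condition $|V(G_z)|\le |V_j|$, and I must ensure the monotone-weight ordering \eqref{ineq:asymetric} (here, the size ordering $|V_i|\le|V_{i+1}|$) is respected throughout. Handling the boundary cases $z\in S$, $z\in\{u,v\}$, and $z$ in $H_u$ versus $H_v$ separately, and confirming that the differences of the constructed vectors isolate exactly one coordinate, is where the argument will be most delicate.
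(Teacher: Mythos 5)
Your two-direction skeleton (standard facet template forward, dimension drop backward) is the natural one, but both halves stop exactly where the mathematical content begins, and in each there is a concrete gap rather than deferred bookkeeping. In the forward direction you attach the robustness hypothesis to the wrong coordinates. The pairs of tight points ``differing only in whether $z$ is assigned to class $j$'' for $j \neq i$ do not need robustness at all: since $n \geq 2k$, singleton-style vectors in the spirit of $\chi$, $\phi$, $\psi$ (class $i$ equal to $\{u\}$ or $\{v\}$, the remaining classes empty or singletons) are tight for the inequality, and toggling $z$ as a singleton in class $j$ already yields $\lambda_{z,j}=0$ for every $z$ and every $j \neq i$. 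What these cheap vectors cannot reach are the class-$i$ coefficients, and that is precisely where robustness is indispensable: to get $\lambda_{z,i}=0$ for $z \notin S \cup \{u,v\}$ one compares the tight point whose class $i$ is $V(G_z)$, completed by the robust $(k-i)$-subpartition of $G-G_z$, with its companion whose class $i$ is $V(G_z)\setminus\{z\}$; here connectivity of $G_z - z$ keeps the class connected and tight, and the size condition $|V(G_z)| \leq |V_j|$ is exactly what keeps the ordering constraints~\eqref{ineq:asymetric} satisfied after $z$ is deleted. Likewise $\lambda_{s,i}=-\lambda_0$ for $s \in S$ comes from the tight point with class $i = V(G_s) \ni u,v,s$ completed by a robust subpartition, compared against the points with class $i=\{u\}$ and class $i=\{v\}$. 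Your proposal never performs these constructions, and as written it claims the hypothesis is needed where it is not, while leaving unexamined the place where it is.

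The reverse direction is the more serious gap: ``this combinatorial obstruction should translate into an additional equation'' is the entirety of your argument, and you neither identify the equation nor prove anything about it. The equation is $x_{z^*,i}=0$, where $z^*$ is a vertex witnessing the failure of robustness, and you must show that \emph{every} point of the face satisfies it. That requires an extraction argument: if a tight incidence vector had $x_{z^*,i}=1$, then tightness forces the class $V_i$ containing $z^*$ to meet $S$ in at most one vertex and to contain a connected subgraph with the same terminal pattern as $G_{z^*}$ (for $z^* \in V(H_v)$ this uses that a path inside $V_i$ between two vertices of $H_v$ cannot leave $H_v$, since it would have to cross the unique separator vertex of $V_i$ twice); minimality of $G_{z^*}$ then gives $|V_i| \geq |V(G_{z^*})|$, and the ordering constraints~\eqref{ineq:asymetric} give $|V_j| \geq |V_i|$ for $j > i$, so the classes indexed by $j>i$ would constitute a robust subpartition for $z^*$, a contradiction. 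Even this sketch has a wrinkle you would have to resolve: those classes are disjoint from $V_i$ but not necessarily from your fixed copy of $G_{z^*}$, so you must either re-choose $G_{z^*}$ inside $V_i$ (legitimate because $G_{z^*}$ is only specified as \emph{some} minimum-size subgraph, all of which have equal cardinality) or argue that robustness does not depend on that choice. Without identifying the equation $x_{z^*,i}=0$ and carrying out this extraction, the concluding dimension count ($\hat F$ lies in two independent hyperplanes, hence has dimension at most $kn-2$) has nothing to stand on.
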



\section{Flow formulations}
\label{section:flow}

We present in this section a mixed integer linear programming
formulation for~$\BCPk$ based on flows in a digraph. For that, given
an input $(G,w)$ for $\BCPk$, we construct a digraph~$D$ as follows: 
First, we add to~$G$ a set~$S=\{s_1, \ldots, s_k\}$ of $k$ new
vertices.  Then, we replace every edge of~$G$ with two arcs with the
same endpoints and opposite directions.  Finally, we add an arc from
each vertex in~$S$ to each vertex of~$G$ (see Figure \ref{fig:instance-trans}). More
formally, the vertex set of~$D$ is $V(D) = V(G) \cup S$ and its arc
set is
\[A(D) = \{(u,v), (v,u) : \{u,v\} \in E(G)\} \cup \{(s_i,v) : i \in [k], v \in V(G)\}.\]

Now, the idea behind the formulation is the following: find in~$D$ a
maximum flow from the~$k$ sources in~$S$ such that every vertex~$v$
in~$V(D)\setminus S$ receives flow only from a single vertex of~$D$
and consumes $w(v)$ of the received flow.  As we shall see, for
every~$i \in [k]$, the flow sent from source~$s_i$ corresponds to the
total weight of the vertices in the $i$-th class of the desired
partition.

To model this concept, with each arc~$a \in A(D)$, we associate a
non-negative real variable~$f_a$ that represents the amount of flow
passing through~$a$, and a binary variable~$y_{a}$ that equals one if
and only if arc~$a$ is used to transport a positive flow. The corresponding
formulation, shown below, is denoted $\mathcal{F}_k (G,w)$.

\begin{align} 
           \max \:&  \sum_{a \in \delta^+(s_1)} f_{a}\nonumber & \\
    \text{s.t.} \:& \sum_{a \in \delta^+(s_i)} f_{a}  \leq \sum_{a \in \delta^+(s_{i+1})} f_{a} & \!\!\!\!\!\!\!\!\!\!\!\!\!\!\!\!\!\!\! \forall i \in [k-1],\label{flow2:ineq:asym} \\  
                \:& \sum_{a \in \delta^{-}(v)} f_{a} - \sum_{a \in \delta^+(v)} f_{a} = w(v) &  \!\!\! \!\!\! \forall  v \in V(D) \setminus S, \label{flow2:ineq:conservation} \\ 
                \:& f_{a} \leq w(G) y_{a}  &  \!\!\!\!\!\! \forall a \in A(D), \label{flow2:ineq:capacity}\\ 
                \:& \sum_{a \in \delta^+(s_i)} y_{a} \leq 1 &  \!\!\!\!\!\! \forall i \in [k],\label{flow2:ineq:outdeg} \\ 
                \:& \sum_{a \in \delta^-(v)} y_{a} \leq 1 &  \!\!\!\!\!\! \forall v \in V(D)\setminus S,\label{flow2:ineq:indeg}  \\ 
                \:& y_{a} \in \{0,1\} & \!\!\! \!\!\! \forall a \in A(D) \label{flow2:ineq:int} \\ 
                \:& f_{a} \in \R_\geq & \!\!\! \!\!\! \forall a \in A(D) \label{flow2:ineq:real}
\end{align}

Inequalities~\eqref{flow2:ineq:outdeg} impose that from every
source~$s_i$ at most one arc leaving it transports a positive flow
to a single vertex in~$V(D)\setminus S$.
Inequalities~\eqref{flow2:ineq:indeg} require that every non-source
vertex receives a positive flow from at most one vertex of~$D$.
By inequalities~\eqref{flow2:ineq:capacity}, a positive flow can
only pass through arcs that are chosen (arcs $a$ for which $y(a)=1$).
Inequalities~\eqref{flow2:ineq:conservation} guarantee that each
vertex~$v \in V(D)\setminus S$ consumes~$w(v)$ of the flow that it
receives.  Finally, inequalities~\eqref{flow2:ineq:asym} impose that
the amount of flow sent by the sources are in a non-decreasing order.
This explains the objective function.

\begin{figure}[!hbt]
  \centering \subfigure[The digraph $D$ obtained from the graph $G$ shown in
  Figure~\ref{subfig:instance}. Vertices~$s_1$ and $s_2$ dominates all
  vertices in the dashed circle. The numbers on the vertices are the
  weights.]{\includegraphics[scale=1.5]{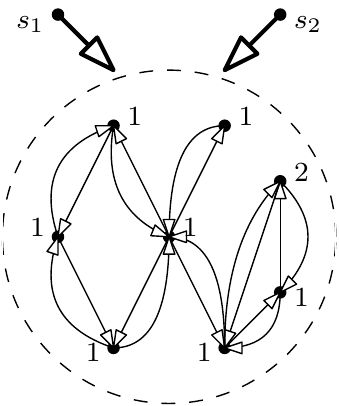}}
  \qquad \quad \subfigure[A feasible solution: the arcs represent
  non-zero $y$-variables, and the flow in  each of them is indicated on their side. Compare with the solution shown in Figure~\ref{fig:instance-sol}.]{\includegraphics[scale=1.5]{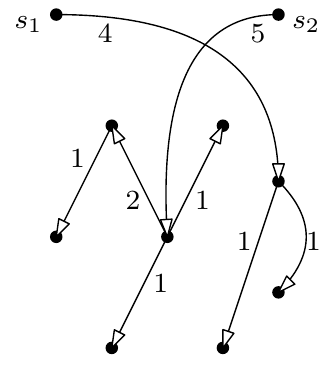}}
    \caption{Digraph $D$ and a feasible solution for  formulation~$\mathcal{F}_2$.\label{fig:instance-trans}}
\end{figure}

Since each non-source vertex receives flow from at most one vertex,
the flows sent by any two distinct sources do not pass through a same
vertex. That is, if a source $s_i$ sends an amount of flow, say $w_i$,
this amount $w_i$ is distributed to a subset of vertices, say $V_i$
(with total weight $w_i$); and all subsets $V_i$ are mutually
disjoint.
Moreover, $w_i$ is exactly the sum of the weights of the
vertices that receive flow from~$s_i$, and $G[V_i]$ is a connected
subgraph of $G$. (See Figure~\ref{fig:instance-trans}.)
It follows from these remarks that formulation~$\mathcal{F}_k$
correctly models~$\BCPk$.

The proposed formulation~$\mathcal{F}_k(G,w)$ has a total of $2nk+4m$
variables (half of them binary), and only~$\bigO(n+m+k)$ constraints,
where $n := |V(G)|$ and~$m := |E(G)|$.  The possible drawbacks of this
formulation are the large amount of symmetric solutions and the
dependency of inequalities~\eqref{flow2:ineq:capacity} on the weights
assigned to the vertices.  To overcome such disadvantages, we propose
in the next section another  model based on flows that considers a
total order of the vertices to avoid symmetries and uncouple the
weights assigned to the vertices from the flow circulating in the
digraph.

\subsection{A second flow formulation}

Our second formulation for~$\BCPk$, denoted by~$\mathcal{F}^\prime_k(G,w)$, is also based on a digraph~$D$ that is
constructed from~$G$ as follows.
It has vertex set $V(D) = V(G)\cup \{s\}$ and
arc set \(A(D) = \{(u,v), (v,u) \colon \{u,v\} \in E(G)\} \cup \{(s,v) \colon
v \in V(D)\}.\)

Moreover, it assumes that there is a total ordering~$\succ$ defined on
the vertices of~$G$.  
For simplicity, for a
vertex~$v \in V(D)\setminus\{s\}$ and integer $i \in [k]$, we use the
short notation $y(\delta^-(v),i)$ instead of $\sum_{a \in \delta^-(v)} y_{a,i}$.

\begin{align} 
           \max \:&  \sum_{v \in V(D)}  w(v)\: y(\delta^-(v),1) \nonumber & \\
    \text{s.t.} \:& \sum_{v \in V(D)\setminus\{s\}}  w(v)\: y(\delta^-(v),i) \leq \sum_{v \in V(D)\setminus\{s\}} w(v) \:y(\delta^-(v),i+1) \!\!\!\!\!\!\!\!\!\!\!\!\!\!\!\!\!\!\!\!\!\!\!\!\!\!\!\!\!\!\!\!\!\!\!\!\!\!\!\!\!\!\!\! & \forall i \in [k-1],\label{flow:ineq:asym}\\ 
                \:& \sum_{a \in \delta^+(s)} y_{a,i} \leq 1 &  \forall i \in [k], \label{flow:ineq:source}\\
                \:& \sum_{i \in [k]} y(\delta^-(v), i) \leq 1 &  \forall v \in V(D)\setminus\{s\}, \label{flow:ineq:indeg}\\ 
                \:&  y_{sv,i} + y(\delta^-(u),i) \leq 1 &  \forall u,v \in V(D)\setminus\{s\}, v \succ u, i \in [k], \label{flow:ineq:root}\\ 
                \:& f_{a,i} \leq n \: y_{a,i}&    \forall a \in A(D), i \in [k], \label{flow:ineq:capacity}\\ 
                \:& \sum_{a \in \delta^+(v)} f_{a,i} \leq \sum_{a \in \delta^-(v)} f_{a,i} &  \forall v \in V(D)\setminus\{s\}, i \in [k],\label{flow:ineq:conservation2} \\ 
                \:& \sum_{i \in [k]} \sum_{a \in \delta^-(v)} f_{a,i} - \sum_{i \in [k]} \sum_{a \in \delta^+(v)} f_{a,i}=1&  \forall v \in V(D)\setminus\{s\}, \label{flow:ineq:conservation}\\ 
                \:& y_{a,i} \in \B & \forall a \in A(D), i \in[k], \label{flow:ineq:int}\\
                \:& f_{a,i} \in \R_\geq & \forall a \in A(D), i \in[k]. \label{flow:ineq:real}
\end{align}

\medskip

To show that the above formulation indeed models~$\BCPk$, let us consider the following polytope:
\[\mathcal{Q}_k(G,w) = \convexhull \{(y,f) \in \B^{(n+2m)k} \times \R^{(n+2m)k} \colon (y,f) \text{ satisfies  ineq.  } \eqref{flow:ineq:asym}-\eqref{flow:ineq:real}\}.\]

Let~$\mathcal{V}$ be a connected $k$-subpartition of~$G$ such
that~$w(V_i) \leq w(V_{i+1})$ for all~$i \in [k-1]$.  
Then, for  each integer~$i \in [k]$, there exists in~$D$ an
out-arborescence~$\vec T_i$ rooted at~$r_i$ such
that~$V(\vec T_i)= V_i$ and~$v\succ r_i $ for
all~$v \in V_i \setminus \{r_i\}$.
Now, let~$g_i$ be the function 
$g_i \colon A(\vec T_i)\cup\{(s,r_i)\} \to \R_\geq$ defined as
follows: $g_i((u,v)) = 1$ if~$v$ is a leaf of~$\vec T_i$, and
$g_i((u,v))=1 + \sum_{(v,z) \in A(\vec T_i)} g_i((v,z))$,  otherwise.
It follows from this definition that~$g_i((s,r_i)) = |V_i|$.

We now define vectors~$\rho(\mathcal{V}) \in \B^{(n+2m)k}$ and $\tau (\mathcal{V}) \in  \R^{(n+2m)k}$ such that, for every arc $a \in A(D)$ and~$i \in [k]$, we have

\begin{minipage}{0.5\linewidth}
\noindent
\begin{displaymath}
    \rho(\mathcal{V})_{a,i} = \left\{
        \begin{array}{lr}
        1, & \text{ if } a \in A(\vec T_i)\cup \{(s,r_i)\}\\
        0, & \text{ otherwise,}
        \end{array}
    \right.
\end{displaymath}
\end{minipage}
\begin{minipage}{0.5\linewidth}
\noindent
\begin{displaymath}
    \tau(\mathcal{V})_{a,i} = \left\{
        \begin{array}{lr}
        g_i(a), & \text{ if } a \in A(\vec T_i)\cup \{(s,r_i)\}\\
        0, & \text{ otherwise.}
        \end{array}
    \right.
\end{displaymath}
\end{minipage}

\medskip
We are now ready to  prove the claimed statement on $\mathcal{Q}_k(G,w)$,

\begin{proposition}
  The polytope $\mathcal{Q}_k(G,w)$ is precisely the polytope
  \[\convexhull\{ (\rho(\mathcal{V}), \tau(\mathcal{V})) \in \B^{(n+2m)k} \times \R^{(n+2m)k} \colon \mathcal{V} \text{ is a connected $k$-partition of }G \}.\]
\end{proposition}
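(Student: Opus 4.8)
The plan is to prove the two set inclusions, mirroring the proof of the analogous statement for the cut polytope $\polytope_k(G,w)$. Throughout, write $P$ for the polytope on the right-hand side, the convex hull of the vectors $(\rho(\mathcal V),\tau(\mathcal V))$.

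First I would show $P\subseteq\mathcal{Q}_k(G,w)$ by checking that each generator $(\rho(\mathcal V),\tau(\mathcal V))$ satisfies \eqref{flow:ineq:asym}--\eqref{flow:ineq:real}; convexity of $\mathcal{Q}_k(G,w)$ then gives the inclusion. Most checks are immediate from the construction: since every $v\in V_i$ has a single in-arc in $A(\vec T_i)\cup\{(s,r_i)\}$, one has $y(\delta^-(v),i)=1$ exactly when $v\in V_i$, which yields \eqref{flow:ineq:asym}, \eqref{flow:ineq:source} and \eqref{flow:ineq:indeg}; \eqref{flow:ineq:root} uses that $r_i$ is the $\succ$-minimum vertex of $V_i$; and \eqref{flow:ineq:capacity} holds because $g_i(a)\le |V_i|\le n$. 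The only place where the counting flow $g_i$ is really needed is \eqref{flow:ineq:conservation2}--\eqref{flow:ineq:conservation}: at each $v\in V_i$ the in-flow of commodity $i$ exceeds its out-flow by exactly one unit, so the net in-flow of $v$ summed over all commodities is $1$. I would note that this last equality is exactly what forces $\mathcal V$ to cover all of $V(G)$.

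The substance is the reverse inclusion, for which I would take an arbitrary extreme point $(y^*,f^*)$ of $\mathcal{Q}_k(G,w)$. Being an extreme point of a convex hull of integral-$y$ points, $y^*$ is integral, so I set $V_i=\{v:y^*(\delta^-(v),i)=1\}$. I would then prove that the support $\{a:y^*_{a,i}=1\}$ is an out-arborescence rooted at the vertex $r_i$ fed by $s$: \eqref{flow:ineq:source} bounds the out-degree of $s$ in commodity $i$ by one, \eqref{flow:ineq:indeg} forces in-degree at most one at every non-source vertex, and \eqref{flow:ineq:conservation} forces each vertex to receive one net unit of flow and hence to lie in the support of some commodity. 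A directed cycle in a support would leave every one of its vertices using its unique in-arc inside the cycle, so no flow from $s$ could enter while each cyclic vertex still had to consume a unit; conservation rules this out. Hence each support is an arborescence, $G[V_i]$ is connected, and $r_i$ is the $\succ$-minimum of $V_i$ by \eqref{flow:ineq:root}, so $\{V_i\}_{i\in[k]}$ decodes to a connected $k$-partition $\mathcal V$ with $\rho(\mathcal V)=y^*$.

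It remains to identify $f^*$ with $\tau(\mathcal V)$, the step I expect to be the main obstacle. For the fixed integral $y^*$, \eqref{flow:ineq:capacity} confines $f^*$ to the support arborescences, and I would argue that \eqref{flow:ineq:conservation2} together with the unit-demand equalities \eqref{flow:ineq:conservation} pin down $f^*$ uniquely: processing each arborescence from the leaves toward the root, the flow on the in-arc of a vertex is forced to equal one plus the sum of the flows on its out-arcs, i.e. its number of descendants, which is precisely $g_i$. Thus $(y^*,f^*)=(\rho(\mathcal V),\tau(\mathcal V))\in P$, and the equality follows. The two care points will be verifying that a fixed valid $y^*$ leaves no continuous freedom in $f^*$ (so these points are genuinely extreme and are generated rather than averaged), and reconciling the possibly-empty low-index classes allowed by the weight ordering \eqref{flow:ineq:asym}: to make the correspondence an exact bijection onto the extreme points one reads $\mathcal V$ as a connected partition of $V(G)$ into $k$ labeled, possibly empty, non-decreasing-weight classes, extending $\rho,\tau$ by zeros on the empty classes.
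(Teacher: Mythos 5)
Your proof is correct and follows the same two-inclusion strategy as the paper's: the generators $(\rho(\mathcal{V}),\tau(\mathcal{V}))$ are checked against \eqref{flow:ineq:asym}--\eqref{flow:ineq:real}, and an extreme point $(y^*,f^*)$ is decoded into classes $V_i=\{v\colon y^*(\delta^-(v),i)=1\}$ whose connectivity follows from \eqref{flow:ineq:source}, \eqref{flow:ineq:indeg}, \eqref{flow:ineq:capacity} and the conservation constraints. Where you differ is in being more complete: the paper's proof stops once $\{U_i\}_{i\in[k]}$ is shown to be a connected $k$-partition and never verifies that the flow component $f^*$ of the extreme point coincides with $\tau(\mathcal{V})$, whereas your leaves-to-root argument (capacity confines $f^*$ to the support arborescences, then \eqref{flow:ineq:conservation2} and \eqref{flow:ineq:conservation} force the in-arc flow at each vertex to equal one plus its out-flows, i.e.\ $g_i$) supplies exactly that missing identification, and it simultaneously justifies that each feasible integral $y$ carries a unique $f$, so these points really are the extreme points. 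Your closing caveat is also well spotted rather than pedantic: nothing in \eqref{flow:ineq:asym}--\eqref{flow:ineq:real} forces commodity $i$ to be used, so integral feasible points with empty low-index classes exist (e.g.\ all of $V(G)$ assigned to class $k$), and such points are not convex combinations of incidence vectors of genuine $k$-partitions; the paper's proof silently asserts a $k$-partition at this step, so reading the right-hand side as allowing empty, zero-extended classes, as you propose, is the natural repair that makes the stated equality exact.
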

\begin{proof}
  Let $(y,f)$ be an extreme point~of $\mathcal{Q}_k(G,w)$; and for every~$i \in [k]$, let
  $U_i = \{v \in V(G) \colon y(\delta^-(v),i) = 1\}$. 
  It follows from inequalities~\eqref{flow:ineq:indeg} that, for every
  vertex $v \in V(D)\setminus\{s\}$, at most one of the arcs entering
  it is chosen.  Observe that inequalities~\eqref{flow:ineq:capacity}
  force that a flow of type~$i$ can only pass through an arc of
  type~$i$ if this arc is chosen.  Hence, every vertex receives at
  most one type of flow from its in-neighbors.  Furthermore,
  inequalities~\eqref{flow:ineq:conservation2}
  and~\eqref{flow:ineq:conservation} guarantee that the flow that
  enters a vertex and leaves it are of the same type, and that each
  vertex consumes exactly one unit of such flow.

  Inequalities~\eqref{flow:ineq:source} imply that all flow of a given
  type passes through at most one arc that has tail at the source~$s$.
  Therefore, we have that~$\{U_i\}_{i \in [k]}$ is a connected
  $k$-partition of~$G$.

  To prove the converse, let~$\mathcal{V} = \{V_i\}_{i \in [k]}$ be a
  connected $k$-partition of~$G$.  We assume without loss of
  generality that $w(V_i)\leq w(V_{i+1})$ for all~$i \in [k-1]$.
  Let~$(y,f)$ be a vector such that~$y=\rho(\mathcal{V})$
  and~$f=\tau(\mathcal{V})$.  For each~$i \in [k]$, every vertex
  in~$\vec T_i$ has in-degree at most one, and~$r_i$ is the smallest
  vertex in~$V(\vec T_i)$ with respect to the order~$\succ$.  Thus,
  inequalities~\eqref{flow:ineq:indeg} and~\eqref{flow:ineq:root} hold
  for~$(y,f)$.  From the definition of~$\rho(\mathcal{V})$, the entry
  of~$y$ indexed by~$(s,r_i)$ and~$i$ equals one, for all~$i \in [k]$.
  Consequently, $(y,f)$ also satisfies
  inequalities~\eqref{flow:ineq:source}.  Recall
  that~$g_i((s,r_i)) = |V_i|$ for every~$i \in [k]$.  This clearly
  implies that inequalities~\eqref{flow:ineq:capacity} are satisfied
  by~$(y,f)$.

  Note that, for every~$i \in [k]$, the function $g_i$ assigns to each
  arc~$(u,v) \in A(\vec T_i) \cup \{(s,r_i)\}$ the value one plus the
  sum of the sizes of the sub-arborescences of~$\vec T_i$ rooted at
  the out-neighbors of~$v$ in~$\vec T_i$.  Hence,
  inequalities~\eqref{flow:ineq:conservation2}
  and~\eqref{flow:ineq:conservation} hold for~$(y,f)$.  Finally,
  inequalities~\eqref{flow:ineq:asym} are satisfied, as we assumed
  that the elements of partition~$\mathcal{V}$ are in a non-decreasing
  order of weights.  Therefore, we conclude that~$(y,f)$ belongs
  to~$\mathcal{Q}_k(G,w)$.
\end{proof}


\section{Computational experiments}
\label{section:experiments}

\subsection{Benchmark instances}
In order to compare the performance of  our algorithms with the exact algorithms that have been proposed in the literature~\cite{Mat14, ZhoWanDinHuSha19}, we ran our experiments on grid graphs and random connected graphs.
Our algorithms are based on the three formulations that we have described in the previous sections.

The grid instances have names in the format \textit{gg\_height\_width\_[a\textbar b]}, while random connected graphs instances have names in the format \textit{rnd\_n\_m\_[a\textbar b]}, where $n$ (resp. $m$) is the number of vertices
(resp. edges) of the graph. 
The characters (``a'' or ``b'') in the end of the name of an instance refer to the range of the weight distribution: character ``a'' (resp. ``b'') indicate that the weights are integers uniformly distributed in the closed interval~$[1, 100]$ (resp.  $[1, 500]$).

To create a random connected graph with $n$ vertices and $m$ edges, with $m > n - 1$, we first use Wilson's algorithm~\cite{Wilson96} to generate a uniformly random spanning tree~$T$ on $n$ vertices, and then add $m - n + 1$ new edges from $E(K_n) \setminus E(T)$ at random with uniform probability. 
Wilson's algorithm returns a spanning tree $T$ sampled from the set $\tau_n$ --- of all possible spanning
trees of $K_n$ ---  with probability ${1}/{|\tau_n|}$.

\subsection{Computational environment}

The computational experiments were carried out on a PC with Intel(R) Core(TM) i7-4720HQ CPU @ 2.60GHz, 4 threads, 8 GB RAM and Ubuntu 18.04.2 LTS. 
The code was written in C++ using Gurobi 8.1~\cite{Gurobi} and the graph library Lemon~\cite{Lemon}.
In order to evaluate strictly the performance of the described formulations, we deactivated all standard cuts used by Gurobi. 
Besides implementing the proposed formulations, we also implemented the Integer Linear Programming models introduced by Matic~\cite{Mat14} and Zhou~et~al~\cite{ZhoWanDinHuSha19}.

\subsection{Computational results}

The execution time limit was set to 1800 seconds. 
In the following tables, we show the number of explored nodes in the Branch-and-Bound tree (column ``Nodes'') and the time (in seconds) to solve the corresponding instance (column ``Time'').
If the time limit is reached, the table entry shows a dash (-).
Henceforth, when we refer to any of the formulations (or models) it
should be understood that we are referring to the corresponding exact algorithms that we
have implemented for them.
Thus, \CutForm\ refers to the Branch-and-Cut algorithm based on formulation~$\mathcal{C}_k$, and
\FlowForm\ and \FlowTwoForm\ refer to the Branch-and-Bound algorithms based on formulations~$\mathcal{F}_k$ and~$\mathcal{F}^\prime_k$. 
For the sake of simplicity, the names of these algorithms are shortened to Cut, Flow and Flow2.

In Table~\ref{table:facetable}, we show the impact of separating cross inequalities.
Columns ``CutF'' and ``Cut'' refer to \CutForm\ with and without the cross inequalities, respectively. 
Columns ``Conn. Cuts'' and ``Cross Cuts'' show the number of connectivity and cross inequalities
separated by the algorithm.  
We note that CutF was faster than Cut in most of the instances.
Furthermore, grids gg\_15\_15\_a and gg\_15\_15\_b
could only be solved by CutF.

Tables~\ref{table:gridtable} and~\ref{table:rndtable} indicate that the proposed algorithms substantially outperform the previous solution methods in the literature. 
On all grids instances, Flow had the best execution time. 
Furthermore, on grids with higher dimensions, the algorithms based on the formulations devised by Matic~\cite{Mat14} and Zhou et al.~\cite{ZhoWanDinHuSha19} could not find a solution within the time limit.
On the other hand, CutF and Flow were able to solve all the instances.

Considering the random graph instances, Table~\ref{table:rndtable} shows that, on some instances, \CutForm\ is better than the \FlowForm. 
Moreover, a reasonable amount of the instances were solved by Cut and Flow in the root node of the Branch-and-Bound tree.

Finally, Table~\ref{table:ktable} indicates that when the value of $k$ is greater than~$2$, the problem becomes much harder to solve. 
Only Flow  was able to solve gg\_07\_10\_a for $k = 3$ and $k = 4$. 
For $k = 5$ and $k = 6$,  none of the algorithms were able to solve the instance within the time limit.

\begin{table}[H]
\caption{Computational results for $\BCPtwo$ on grid graphs with the amount of cuts added by the algorithms.}
\small
\rowcolors{1}{lightgray}{}
\centering
\begin{tabular}{lrrrrrr}
\hiderowcolors
\toprule
& \multicolumn{2}{c}{Cut} & \multicolumn{3}{c}{CutF} \\ 
\cmidrule(r){2-3} \cmidrule(r){4-6}
Instance & Conn. Cuts & Time & Conn. Cuts & Cross Cuts & Time \\ 
\midrule
\noalign{\global\rownum=1}
\showrowcolors
gg\_05\_05\_a & 3242  & 0.73 & 577 & 796 & \textbf{0.17} \\ 
gg\_05\_05\_b & 3527  & 0.87 & 1043 & 1347 & \textbf{0.34} \\ 
gg\_05\_06\_a & 2696  & 0.47 & 58 & 576 & \textbf{0.11} \\ 
gg\_05\_06\_b & 5629  & 1.45 & 1255 & 1310 & \textbf{0.30} \\ 
gg\_05\_10\_a & 7544  & \textbf{1.80} & 10120 & 11436 & 3.70 \\ 
gg\_05\_10\_b & 13192 & 4.53 & 2556 & 3703 & \textbf{0.67} \\ 
gg\_05\_20\_a & 9802  & \textbf{3.86} & 46040 & 3976 & 47.57 \\ 
gg\_05\_20\_b & 36455 & \textbf{23.09} & 4185 & 46185 & 85.53 \\ 
gg\_07\_07\_a & 1074  & \textbf{2.9} & 5996 & 7538 & 3.18 \\ 
gg\_07\_07\_b & 10246 & 3.12 & 7328 & 8258 & \textbf{2.20} \\ 
gg\_07\_10\_a & 20397 & 8.42 & 14215 & 13767 & \textbf{5.82} \\ 
gg\_07\_10\_b & 58657 & 71.57 & 23294 & 2255 & \textbf{19.24} \\ 
gg\_10\_10\_a & 1178 & \textbf{4.62} & 23183 & 26620 & 17.08 \\ 
gg\_10\_10\_b & 316601 & 1313.07 & 26738 & 28558 & \textbf{22.45} \\ 
gg\_15\_15\_a & 0 & - & 83527 & 53287 & \textbf{121.56} \\ 
gg\_15\_15\_b & 0 & - & 116041 & 73920 & \textbf{250.01} \\ 
\bottomrule
\end{tabular}
\label{table:facetable}
\end{table}

\begin{table}
\caption{Computational results for $\BCPtwo$ on grid graphs.}
\small
\rowcolors{1}{lightgray}{}
\centering
\begin{tabular}{lrrrrrrrrrrrrrrr}
\hiderowcolors
\toprule
& \multicolumn{2}{c}{CutF} & \multicolumn{2}{c}{Flow} & \multicolumn{2}{c}{Flow2} & \multicolumn{2}{c}{Matic} & \multicolumn{2}{c}{Zhou} \\ 
\cmidrule(r){2-3} \cmidrule(r){4-5} \cmidrule(r){6-7} \cmidrule(r){8-9} \cmidrule(r){10-11} 
Instance & Nodes & Time & Nodes & Time & Nodes & Time & Nodes & Time & Nodes & Time \\ 
\midrule
\noalign{\global\rownum=1}
\showrowcolors
gg\_05\_05\_a & 252 & 0.17 & 299 & \textbf{0.06} & 1697 & 0.23 & 299 & 0.23 & 674 & 0.17 \\ 
gg\_05\_05\_b & 390 & 0.34 & 1815 & \textbf{0.12} & 1678 & 0.23 & 5957 & 0.74 & 1708 & 0.24 \\ 
gg\_05\_06\_a & 71 & 0.11 & 323 & \textbf{0.06} & 541 & 0.15 & 6229 & 1.24 & 2134 & 0.46 \\ 
gg\_05\_06\_b & 284 & 0.30 & 1843 & \textbf{0.13} & 202500 & 19.73 & 35627 & 7.49 & 1979 & 0.39 \\ 
gg\_05\_10\_a & 747 & 3.70 & 343 & \textbf{0.10} & 1931 & 1.30 & 74004 & 14.10 & 7240 & 2.84 \\ 
gg\_05\_10\_b & 300 & 0.67 & 511 & \textbf{0.13} & 1002 & 0.83 & 213650 & 43.46 & 19342 & 5.35 \\ 
gg\_05\_20\_a & 1403 & 47.57 & 328 & \textbf{0.16} & 2825 & 4.69 & - & - & 729662 & 218.24 \\ 
gg\_05\_20\_b & 2530 & 85.53 & 1700 & \textbf{0.71} & 396 & 1.07 & - & - & - & - \\ 
gg\_07\_07\_a & 959 & 3.18 & 311 & \textbf{0.11} & 411 & 0.58 & 103416 & 16.29 & 14807 & 4.73 \\ 
gg\_07\_07\_b & 615 & 2.20 & 1515 & \textbf{0.24} & 1033 & 1.18 & 483949 & 101.43 & 3255 & 1.43 \\ 
gg\_07\_10\_a & 779 & 5.82 & 449 & \textbf{0.19} & 391 & 0.74 & - & - & 11951 & 6.76 \\ 
gg\_07\_10\_b & 1479 & 19.24 & 606 & \textbf{0.16} & 872 & 1.48 & - & - & 18059 & 6.85 \\ 
gg\_10\_10\_a & 1111 & 17.08 & 313 & \textbf{0.18} & 262 & 1.11 & - & - & - & - \\ 
gg\_10\_10\_b & 1206 & 22.45 & 836 & \textbf{0.31} & 765832 & 595.67 & - & - & 3574970 & 957.98 \\ 
gg\_15\_15\_a & 1136 & 121.56 & 155 & \textbf{0.40} & 531 & 6.91 & - & - & - & - \\ 
gg\_15\_15\_b & 2562 & 250.01 & 1457 & \textbf{1.60} & - & - & - & - & - & - \\ 
\bottomrule
\end{tabular}
\label{table:gridtable}
\end{table}

\begin{table}
\caption{Computational results for $\BCPtwo$ on random graphs.}
\small
\rowcolors{1}{lightgray}{}
\centering
\begin{tabular}{lrrrrrrrrrr}
\hiderowcolors
\toprule
& \multicolumn{2}{c}{Cut} & \multicolumn{2}{c}{Flow} & \multicolumn{2}{c}{Flow2} & \multicolumn{2}{c}{Matic} & \multicolumn{2}{c}{Zhou} \\ 
\cmidrule(r){2-3} \cmidrule(r){4-5} \cmidrule(r){6-7} \cmidrule(r){8-9} \cmidrule(r){10-11}
Instance & Nodes & Time & Nodes & Time & Nodes & Time & Nodes & Time & Nodes & Time \\ 
\midrule
\noalign{\global\rownum=1}
\showrowcolors
rnd\_20\_30\_a & 7 & \textbf{0.02} & 463 & 0.05 & 1509 & 0.16 & 6410 & 0.52 & 163 & 0.08 \\ 
rnd\_20\_30\_b & 7 & \textbf{0.02} & 1729 & 0.10 & 7596 & 0.66 & 7249 & 0.91 & 939 & 0.11 \\ 
rnd\_20\_50\_a & 45 & \textbf{0.02} & 323 & 0.06 & 279 & 0.22 & 117 & 0.12 & 79 & 0.11 \\ 
rnd\_20\_50\_b & 639 & 0.13 & 311 & \textbf{0.07} & 299 & 0.11 & 1758 & 0.27 & 103 & 0.23 \\ 
rnd\_20\_100\_a & 1 & \textbf{0.01} & 1 & 0.02 & 169 & 0.19 & 79 & 0.08 & 77 & 0.20 \\ 
rnd\_20\_100\_b & 19 & \textbf{0.01} & 661 & 0.14 & 515 & 0.31 & 37 & 0.11 & 5 & 0.22 \\ 
rnd\_30\_50\_a & 182 & 0.12 & 23 & \textbf{0.04} & 574 & 0.15 & 2793 & 0.61 & 594 & 0.19 \\ 
rnd\_30\_50\_b & 55 & \textbf{0.08} & 5499 & 0.44 & 997 & 0.28 & 3169 & 0.62 & 1051 & 0.22 \\ 
rnd\_30\_70\_a & 1 & \textbf{0.02} & 1 & 0.03 & 295 & 0.19 & 1608 & 0.38 & 203 & 0.37 \\ 
rnd\_30\_70\_b & 15 & \textbf{0.03} & 1326 & 0.15 & 355 & 0.21 & 2721 & 0.84 & 528 & 0.34 \\ 
rnd\_30\_200\_a & 1 & \textbf{0.01} & 1 & 0.06 & 49 & 0.34 & 1 & 0.11 & 19 & 0.46 \\ 
rnd\_30\_200\_b & 1 & \textbf{0.01} & 57 & 0.17 & 327 & 0.99 & 2133 & 1.88 & 141 & 0.48 \\ 
rnd\_50\_70\_a & 75 & 0.16 & 23 & \textbf{0.06} & 1026 & 0.41 & 3812 & 1.42 & 1270 & 0.40 \\ 
rnd\_50\_70\_b & 474 & 0.67 & 787 & \textbf{0.10} & 2606 & 0.59 & 2530 & 1.05 & 1605 & 0.61 \\ 
rnd\_50\_100\_a & 83 & 0.17 & 327 & \textbf{0.09} & 462 & 0.76 & 7548 & 1.78 & 180 & 0.37 \\ 
rnd\_50\_100\_b & 147 & 0.21 & 15 & \textbf{0.05} & 468 & 0.64 & 5043 & 1.62 & 708 & 0.51 \\ 
rnd\_50\_400\_a & 1 & \textbf{0.03} & 1 & 0.13 & 1 & 0.37 & 2525 & 2.91 & 99 & 1.47 \\ 
rnd\_50\_400\_b & 1 & \textbf{0.03} & 1 & 0.09 & 478 & 3.22 & 2795 & 2.73 & 41 & 2.21 \\ 
rnd\_70\_100\_a & 55 & 0.22 & 583 & \textbf{0.11} & 865 & 0.60 & 401 & 0.12 & 915 & 0.68 \\ 
rnd\_70\_100\_b & 755 & 1.28 & 1112 & \textbf{0.19} & 1084 & 0.60 & 5173 & 2.12 & 1696 & 1.03 \\ 
rnd\_70\_200\_a & 1 & \textbf{0.05} & 71 & 0.18 & 1 & 0.43 & 2235 & 1.15 & 385 & 1.53 \\ 
rnd\_70\_200\_b & 21 & \textbf{0.09} & 1034 & 0.26 & 42 & 0.62 & 9756 & 2.26 & 267 & 0.62 \\ 
rnd\_70\_600\_a & 1 & \textbf{0.01} & 1 & 0.15 & 35 & 2.05 & 57 & 0.37 & 1 & 0.70 \\ 
rnd\_70\_600\_b & 19 & \textbf{0.07} & 685 & 0.54 & 1 & 0.80 & - & - & 146 & 3.01 \\
rnd\_100\_150\_a & 235 & 2.27 & 71 & \textbf{0.15} & 1718 & 2.89 & - & - & 1149 & 1.24 \\ 
rnd\_100\_150\_b & 63 & 0.41 & 539 & \textbf{0.19} & 527 & 1.16 & 1534 & 1.40 & 842 & 1.20 \\ 
rnd\_100\_300\_a & 1 & \textbf{0.06} & 1 & 0.13 & 1381 & 4.89 & 28994 & 6.17 & 490 & 1.56 \\ 
rnd\_100\_300\_b & 19 & \textbf{0.10} & 252 & 0.20 & 1 & 1.21 & 28837 & 10.39 & 343 & 2.62 \\ 
rnd\_100\_800\_a & 1 & \textbf{0.04} & 1 & 0.24 & 1 & 1.51 & 105575 & 112.73 & 1 & 1.52 \\ 
rnd\_100\_800\_b & 1 & \textbf{0.05} & 35 & 0.49 & 1 & 1.80 & - & - & 420 & 5.28 \\ 
rnd\_200\_300\_a & 353 & 12.05 & 1 & \textbf{0.20} & 245114 & 147.72 & 358170 & 445.98 & 1917 & 18.27 \\ 
rnd\_200\_300\_b & 1618 & 29.48 & 879 & \textbf{0.46} & 2765 & 12.60 & 49330 & 67.92 & 4606 & 9.69 \\ 
rnd\_200\_600\_a & 1 & \textbf{0.12} & 1 & 0.27 & 2965 & 20.61 & - & - & 735 & 15.08 \\ 
rnd\_200\_600\_b & 39 & 1.14 & 1295 & \textbf{0.93} & 5169 & 31.24 & - & - & 939 & 10.25 \\ 
rnd\_200\_1500\_a & 1 & \textbf{0.08} & 1 & 0.60 & 1 & 4.67 & 24900 & 83.52 & 1 & 3.25 \\ 
rnd\_200\_1500\_b & 1 & \textbf{0.08} & 1 & 0.51 & 7956 & 160.97 & 32658 & 103.12 & 489 & 17.31 \\ 
rnd\_300\_500\_a & 227 & 4.88 & 675 & \textbf{0.79} & 4955 & 30.65 & 5848 & 12.15 & - & - \\ 
rnd\_300\_500\_b & 827 & 19.65 & 201 & \textbf{0.48} & 3165 & 29.86 & 6917 & 14.28 & 1635 & 33.06 \\ 
rnd\_300\_1000\_a & 1 & \textbf{0.23} & 1726 & 1.69 & 3580 & 48.44 & 7936 & 26.08 & 316 & 16.01 \\ 
rnd\_300\_1000\_b & 1 & \textbf{0.55} & 1353 & 2.03 & 4633 & 61.26 & - & - & 642 & 16.74 \\ 
rnd\_300\_2000\_a & 1 & \textbf{0.10} & 1 & 0.85 & 4404 & 132.81 & 12852 & 87.30 & 34 & 58.24 \\ 
rnd\_300\_2000\_b & 1 & \textbf{0.16} & 86 & 1.51 & 2541 & 69.61 & - & - & 41 & 31.19 \\ 
\bottomrule
\end{tabular}
\label{table:rndtable}
\end{table}

\begin{table}[H]
\caption{Computational results for $\BCPk$ on a grid with height 7 and width 10.}
\rowcolors{1}{lightgray}{}
\centering
\begin{tabular}{lcrrrrrrrr}
\hiderowcolors
\toprule
& & \multicolumn{2}{c}{CutF} & \multicolumn{2}{c}{Flow} & \multicolumn{2}{c}{Flow2} & \multicolumn{2}{c}{Zhou} \\ 
\cmidrule(r){3-4} \cmidrule(r){5-6} \cmidrule(r){7-8} \cmidrule(r){9-10} 
Instance & $k$ & Nodes & Time & Nodes & Time & Nodes & Time & Nodes & Time \\ 
\midrule
\noalign{\global\rownum=1}
\showrowcolors
gg\_07\_10\_a & 3 & - & - & 1010 & \textbf{0.29} & - & - & - & - \\ 
gg\_07\_10\_a & 4 & - & - & 650619 & \textbf{82.32} & - & - & - & - \\ 
gg\_07\_10\_a & 5 & - & - & - & - & - & - & - & - \\ 
gg\_07\_10\_a & 6 & - & - & - & - & - & - & - & - \\ 
\bottomrule
\end{tabular}
\label{table:ktable}
\end{table}


\section{Conclusion}
\label{section:conclusion}

We proposed three mixed integer linear programming formulations for the Balanced Connected $k$-Partition Problem. 
To avoid some symmetries, our formulations impose an ordering of the classes $\{V_i\}_{i \in [k]}$, such that $w(V_i) \leq w(V_{i + 1})$, for all $i \in [k - 1]$. 
The first one, $\mathcal{C}_k$, is defined on the input graph (differently from the other two formulations) and has a potentially large (i.e. exponential) number of connectivity inequalities. 
Moreover, we also presented a new class of valid inequalities for this formulation, and separate them on planar graphs. 
Computational experiments indicated that the addition of these inequalities improves greatly the performance of the algorithm.
In the case the vertices have the same weight, we proved that the associated polytope is full-dimensional and characterized several inequalities that define facets of this polytope.

We also proposed two formulations based on flows in a digraph.
Formulation~$\mathcal{F}_k$ has a polynomial number of variables and constraints. 
To avoid symmetrical solutions and dependency on the weights of the vertices, we designed formulation~$\mathcal{F}^\prime_k$.
However, in our computational experiments, the performance of~\FlowForm\ was always superior to \FlowTwoForm.

Preliminary experiments  showed that both \CutForm\ and \FlowForm\ have better performance when compared with the exact algorithms in the literature.
We plan to carry out further experiments on more instances, specially on some
real data (e.g. problems on police patrolling), to evaluate the performance of all algorithms mentioned in this work.

\bibliographystyle{abbrv}
\bibliography{BCP-bibliography}

\end{document}